\algnewcommand{\algorithmicgoto}{\textbf{go to}}%
\algnewcommand{\Goto}[1]{\algorithmicgoto~\ref{#1}}%
\algnewcommand\algorithmicinput{\textbf{Input:}}
\algnewcommand\INPUT{\item[\algorithmicinput]}
\algnewcommand\algorithmicoutput{\textbf{Output:}}
\algnewcommand\OUTPUT{\item[\algorithmicoutput]}
\newcommand{\Set}{\textbf{Set}}
\newcommand{\Pos}{\textbf{Pos}}
\newcommand{\supp}{\mathsf{supp}}
\newcommand{\tuple}[1]{\left<#1\right>}
\newcommand{\x}{\mathbf{x}}
\newcommand{\y}{\mathbf{y}}
\newcommand{\s}{\mathbf{s}}
\newcommand{\vi}{\mathbf{v}}
\newcommand{\g}{\mathbf{g}}
\newcommand{\D}{\mathcal{D}}
\newcommand{\E}{\mathcal{E}}
\renewcommand{\L}{\mathcal{L}}
\newcommand{\M}{\mathcal{M}}
\renewcommand{\P}{\mathcal{P}}
\renewcommand{\S}{\mathcal{S}}
\renewcommand{\tuple}[1]{\left\langle #1 \right\rangle}
\newcommand{\comment}[1]{}
\newcommand{\proj}[2]{{#1^{\downarrow #2}}} 
\newcommand{\projn}[2]{#1_{_{\downarrow #2}}} 
\newcommand{\Models}{\mathbb{M}}
\newcommand{\sch}{\mathsf{scheme}}
\newcommand{\Scheme}[1]{\sch(#1)}
\newcommand{\rel}{\mathsf{rel}}
\newcommand{\relation}[1]{\rel(#1)}
\newcommand{\KEYbf}[1]{\textbf{#1}\index{#1| textbf}}
\newcommand{\Pfin}{\P_{\mathsf{fin}}}
\newcommand{\Theory}{\mathscr{T}}
\newcommand{\Structure}{\mathscr{M}}
\newcommand{\nocontentsline}[3]{}
\newcommand{\tocless}[2]{\bgroup\let\addcontentsline=\nocontentsline#1{#2}\egroup}
\newcommand{\defeq}{\colonequals} 
\newcommand{\LC}{\mathsf{LC}}
\newcommand{\SC}{\mathsf{SC}}
\newcommand{\scenario}{\tuple{X, \M, (O_m)}}
\newcommand{\RR}  {\mathbb{R}}
\newcommand{\BB}  {\mathbb{B}} 
\newtheorem{theorem}{\bf Theorem}[section]
\newtheorem{proposition}{\bf Proposition}[section]
\theoremstyle{definition}
\newtheorem{defn}{\bf Definition}[section]
\newtheorem{exmp}{\bf Example}[section]
\newcommand{\ie}{\textit{i.e.}~}
\begin{document}

\title{Non-locality, contextuality and valuation algebras: a general theory of disagreement}

\author{
Samson Abramsky$^{1}$, Giovanni Car\`{u}$^{1}$}

\address{$^{1}$Department of Computer Science, University of Oxford, Wolfson Building, Parks Road, Oxford OX1 3QD, U.K.}

\subject{Quantum information, quantum foundations, generic inference}

\keywords{Contextuality, generic inference, valuation algebras, algorithms}

\corres{Samson Abramsky\\
\email{samson.abramsky@cs.ox.ac.uk}}

\begin{abstract}
We establish a strong link between two apparently unrelated topics: the study of conflicting information in the formal framework of valuation algebras, and the phenomena of non-locality and contextuality. 
In particular, we show that these peculiar features of quantum theory are mathematically equivalent to a general notion of \emph{disagreement} between information sources. This result vastly generalises previously observed connections between contextuality, relational databases, constraint satisfaction problems, and logical paradoxes, and gives further proof that contextual behaviour is not a phenomenon limited to quantum physics, but pervades various domains of mathematics and computer science. The connection allows to translate theorems, methods and algorithms from one field to the other, and paves the way for the application of generic inference algorithms to study contextuality.

\end{abstract}


\begin{fmtext}
\section{Introduction}
Non-locality and contextuality are characteristic features of quantum theory which have been recently proven to play a crucial role as fundamental resources for quantum information and computation \cite{HowardEtAl:ContextualityMagic, Raussendorf:ContextualityInMBQC}. In 2011, Abramsky and Brandenburger introduced an abstract mathematical framework based on sheaf theory to describe these phenomena in a unified treatment, thereby providing a common general theory for the study of non-locality and contextuality, which had been carried out in a rather concrete, example-driven fashion until then \cite{AbramskyBrandenburger}. This high-level description shows that contextuality is not a feature specific to quantum theory, but rather a general mathematical property. As such, it can be witnessed in many areas of science, even in domains unrelated to quantum physics.

%
%
%
%
%
\end{fmtext}

%
%
\maketitle
\noindent
Notable examples of the sheaf-theoretic notion of contextuality have been found in connection with relational database theory \cite{Abramsky12:databases}, robust constraint satisfaction \cite{AbramskyGottlobKolaitis2013:RobustConstraintSatisfactionAndLHV}, natural language semantics \cite{AbramskySadrzadeh2014:SemanticUnification}, and logical paradoxes \cite{AbramskyEtAl:CCP2015, AbramskyBarbosaCaruPerdrix:AvNTriple, DeSilva2017}. This profusion of instances motivates the search for a general theory of \emph{contextual semantics}, an all-comprehensive approach able to capture the essence and structure of contextual behaviour. 
In this paper, we propose such a general framework, based on the idea of \emph{disagreement} between information sources. 

The concepts of \emph{querying} and \emph{combining}   \emph{pieces of information} are elegantly captured by the theory of \emph{valuation algebras}, introduced by Shenoy in the late 1980s \cite{Shenoy:1989aa, SHENOY:1990aa}. By constructing a natural definition of disagreement in this framework, completely independently of quantum theory, we show that contextuality is  a special case of a situation where information sources agree locally, yet disagree globally. The valuation algebraic formalism captures an extremely wide range of areas of mathematics and computer science -- including relational databases \cite{kohlas1996information}, constraint satisfaction problems \cite{kohlas2000computation}, propositional logic \cite{shenoy1994consistency, KohlasEtAl1999}, and many more -- therefore, not only does the theory developed in this paper naturally specialise to the aforementioned examples of contextuality outside the quantum realm, but it allows this phenomenon to be recognized in a much wider range of fields. The purpose of this paper is to introduce a general vocabulary for contextual behaviour, which can then be used to translate theorems, methods and algorithms from one field to the other. The generality of the valuation algebraic framework makes the scope for potential new results extremely wide, and lays the foundations for the application of efficient generic inference algorithms \cite{Shenoy:1989aa, shafer1991local, kohlas2003information, pouly2008generic, PoulySoftware2010, pouly2012generic} to the problem of detecting contextuality.

\paragraph{Other approaches to contextuality}
A number of other approaches to developing a general theory of contextuality have emerged over the past few years \cite{spekkens2005contextuality,csw2014graphtheoretic,acin2015combinatorial,dzhafarov2016context}; of these, the Contextuality-by-Default approach \cite{dzhafarov2016context} also emphasizes applications of contextuality beyond quantum theory. However, this approach is focussed on probabilistic models. By encompassing a wider range of models, notably possibilistic ones, the sheaf-theoretic approach allows a much larger class of examples to be recognized as exhibiting contextuality.

\paragraph{Outline} We begin, in Section \ref{sec: valuation algebras intro}, by introducing valuation algebras, inference problems, and a few key examples. In Section \ref{sec: disagreement} we introduce a general definition of disagreement and discuss many instances of local agreement and global disagreement. Section \ref{sec: NL C} reviews the sheaf-theoretic definition of non-locality and contextuality, and presents the connection with disagreement. In Section \ref{sec: disagreement and inference problems}, we show that, in many relevant valuation algebras, detecting disagreement is in fact an inference problem, and introduce the concept of complete disagreement. Section \ref{sec: disagreement and possibilistic forms of contextuality} deals with the connection between disagreement and logical forms of contextuality. Finally, we discuss future research paths in Section \ref{sec: conclusion}.

\section{Valuation algebras and generic inference}\label{sec: valuation algebras intro}
\subsection{Basic definitions}
We begin by reviewing the language of \emph{valuation algebras}. 
In the simplest terms, a valuation algebra is a set of pieces of information, or \emph{valuations}, concerning some variables. Each valuation carries information about a subset of the variables, called its \emph{domain}. Valuations can be \emph{combined} together to obtain joint information, or \emph{projected} to focus the available information on specific variables.

\begin{defn}
Let $V$ be a 
set of variables. We denote by $\Pfin(V)$ the set of finite subsets of $V$. A \KEYbf{valuation algebra} over $V$ is a set $\Phi$ equipped with three operations:
\begin{enumerate}
\item Labelling: $d:\Phi\rightarrow \Pfin(V)::\phi\mapsto d(\phi).$\index{valuation algebra!labelling operation of|textbf}
\item Projection: $\downarrow:\Phi\times \Pfin(V)\rightarrow \Phi:: (\phi, S)\mapsto \proj{\phi}{S}$, for all $S\subseteq d(\phi)$, 
\item Combination: $-\otimes-:\Phi\times \Phi\longrightarrow \Phi:: (\phi,\psi)\mapsto \phi\otimes \psi,$
\end{enumerate}
such that the following axioms are satisfied: 
\begin{enumerate}[label=(A\arabic*)]
\item \emph{Commutative Semigroup}: $(\Phi,\otimes)$ is associative and commutative. \label{semigroup}
\item \emph{Projection}: Given $\phi\in\Phi$ and $S\subseteq d(\phi)$, \label{projection}
\[
d\left(\proj{\phi}{S}\right)=S.
\]
\item \emph{Transitivity}: Given $\phi\in\Phi$ and $S\subseteq T \subseteq d(\phi)$, 
\[
\proj{\left(\proj{\phi}{T}\right)}{S}=\proj{\phi}{S}.
\]\label{transitivity}
\item \emph{Domain}: Given $\phi\in\Phi$,
\[
\proj{\phi}{d(\phi)}=\phi.
\]\label{domain}

\item \emph{Labelling}: For all $\phi,\psi\in\Phi$, \label{labelling}
\[
d(\phi\otimes \psi)=d(\phi)\cup d(\psi)
\]
\item\emph{Combination}: For $\phi,\psi\in\Phi$, with $d(\phi)=S$, $d(\psi)=T$ and $U\subseteq V$ such that $S\subseteq U\subseteq S\cup T$, \label{combination}
\[
\proj{(\phi\otimes\psi)}{U}=\phi\otimes\proj{\psi}{U\cap T}
\]
\end{enumerate}
The elements of a valuation algebra are called \textbf{valuations}. A set of valuations is called a \KEYbf{knowledgebase}. 
A set of variables $D\subseteq V$ is called a \KEYbf{domain}. The \textbf{domain of a valuation}\index{valuation!domain of|textbf} $\phi$ is the set $d(\phi)$. 
\end{defn}

Intuitively, a valuation $\phi\in \Phi$ represents information about the possible values of a finite set of variables $d(\phi)=\{x_1,\dots, x_n\}\subseteq V$, which constitutes the domain of $\phi$. For any set of variables $S\subseteq V$, we denote by $\Phi_S:=\{\phi\in\Phi\mid d(\phi)=S\}$ the set of valuations with domain $S$. Thus, $\Phi=\bigcup_{S\subseteq V}\Phi_S$.
The projection operation can be interpreted as the natural process of \emph{focusing} information over a set of variables to the subset relevant for a given problem. Combination, on the other hand, models the way pieces of information can be merged to obtain knowledge on a larger set of variables. With this interpretation, all of the axioms above should be intuitively clear.

Besides axioms \ref{projection}--\ref{combination}, it is often desirable to add some additional postulates, which collectively give rise to the notion of \emph{information algebra}.
\begin{defn}
Let $\Phi$ be a valuation algebra over a set of variables $V$. 
\begin{itemize}
\item We say that $\Phi$ has \textbf{neutral elements}\index{valuation algebra!neutral elements of|textbf} if it satisfies
\begin{enumerate}[label=(A\arabic*)]
\setcounter{enumi}{6}
\item \emph{Commutative monoid}: \label{monoid}
For each $S\subseteq V$, there exists a \emph{neutral element} $e_S\in\Phi_S$ such that 
\[
\phi\otimes e_S=e_S\otimes \phi=\phi
\] 
for all $\phi\in\Phi_S$. Such neutral elements must satisfy the following identity:
\[
e_S\otimes e_T=e_{S\cup T}
\]
for all subsets $S,T\subseteq V$.
\end{enumerate}
\item We say that $\Phi$ has \textbf{null elements}\index{valuation algebra!null elements of|textbf} if it satisfies
\begin{enumerate}[label=(A\arabic*)]
\setcounter{enumi}{7}
\item \emph{Nullity}: For each $S\subseteq V$ there exists a \emph{null element} $z_S\in\Phi_S$ such that \label{nullity}
\[
\phi\otimes z_S=z_S\otimes \phi=z_S.
\]
Moreover, for all $S,T\subseteq V$ such that $S\subseteq T$, we have, for each $\phi\in\Phi_T$, 
\begin{equation}\label{equ: z}
\proj{\phi}{S}=z_S \Longleftrightarrow \phi=z_T.
\end{equation}
\end{enumerate}
\item We say that $\Phi$ is \textbf{idempotent}\index{valuation algebra!idempotent valutation algebra|textbf} if it satisfies
\begin{enumerate}[label=(A\arabic*)]
\setcounter{enumi}{8}
\item \emph{Idempotency}:\label{idempotency}
For all $\phi\in\Phi$ and $S\subseteq d(\phi)$, it holds that
\[
\phi\otimes\proj{\phi}{S}=\phi
\]
\end{enumerate}
\item If $\Phi$ satisfies axioms \ref{monoid}--\ref{idempotency}, it is called an \KEYbf{information algebra}
\end{itemize}
\end{defn}

There are simple intuitions behind these additional axioms. Neutral elements correspond to `useless information', in the sense that they do not improve any other information with which they are combined. Null elements, on the other hand, can be interpreted as destructive information, \ie knowledge that corrupts any other valuation to the point of making it useless. Idempotency is the signature axiom of qualitative or logical, rather than quantitative, e.g.~probabilistic, information. It says that counting \emph{how many times} we have a piece of information is irrelevant.

\subsection{Basic examples}
Information appears in many different ways: news, data, statistics, propositions, etc. The valuation algebraic formalism brings all of these instances within the scope of a joint theory. In this section, we present a few examples, which are particularly meaningful for our study.

We begin by introducing the notion of \emph{frame}. Given a variable $x\in V$, the \textbf{frame} of $x$, denoted by $\Omega_x$, is the set of possible values for $x$. Given a set of variables $S\subseteq V$, we can model the piece of information constituted by variables in $S$ having acquired specific values in their respective frames as a tuple $\x\in
\Omega_S:=\prod_{x\in S}\Omega_x.$
Given a tuple $\x\in\Omega_S$ and a subset $T\subseteq S$, we will denote by $\projn{\x}{T}$ the cartesian projection of $\x$ onto $\Omega_T$.\footnote{We have used the different notation $\projn{(\cdot)}{(-)}$ for the projection of tuples to distinguish it from the projection $\proj{(\cdot)}{(-)}$ of valuation algebras.}

\subsubsection*{Relational databases}\index{relational database|textbf}

Consider the following data table, taken from \cite{Abramsky12:databases}.
\begin{center}
\begin{tabular}{ c | c | c | c }
branch-name & account-no & customer-name & balance\\
\hline
Cambridge & 10991-06284 & Newton & 2,567.53 GBP\\
Hanover & 10992-35671 & Leibniz & 11,245.75 EUR\\
$\dots$ & $\dots$ & $\dots$ & $\dots$ \\
\end{tabular}
\end{center}

\noindent We identify the list of \emph{attributes}\index{relational database!attribute of|textbf}:
$V\defeq\{\text{branch-name, account-no, custmer-name, balance}\}$, which
label the columns of the table, and constitute the variables of the valuation algebra we shall now define.
Each entry of the table is a tuple specifying a value for each of the attributes. Thus, the full table is  a set of tuples, or a \emph{relation} in database  terminology. A relational database consists of a set of such relations \cite{ullman1984principles}.
Abstracting from this example, given a set of attributes $V$, we define a relation over a finite subset $S\subseteq V$ as a subset $R\subseteq \Omega_S$. The domain of $R$, often called \emph{schema} of $R$ in database theory, is then $d(R)=S$. 
Combination is given by the \emph{natural join}\index{natural join|textbf}: given two relations $R_1$ and $R_2$ with domain $S$ and $T$ respectively, then
\[
R_1\otimes R_2:=R_1\Join R_2=\{\x\in\Omega_{S\cup T}\mid \projn{\x}{S}\in R_1~\wedge~\projn{\x}{T}\in R_2\},
\]
which can be easily shown to be idempotent.
Given a relation $R$ with domain $d(R)=S$, and a subset $T\subseteq S$, we define projection as follows:
\begin{equation}\label{equ: databases projection}
\proj{R}{T}:=\{\projn{\x}{T}\mid \x\in R\}.
\end{equation}
For each $S\subseteq V$, we define the neutral element as $e_S:=\Omega_S$. On the other hand, the null element is defined as $z_S:=\emptyset$. The set $\Phi\defeq \bigcup_{S\subseteq V}\Pfin(\Omega_S)$, equipped with the operations and elements above is thus an information algebra.

\subsubsection*{Semiring valuation algebras}\index{valuation algebra!semiring valuation algebra|textbf}

Let $\tuple{R, +, \cdot, 0 ,1}$ be a commutative semiring and $V$ a set of variables. A semiring valuation with domain a finite subset $S\subseteq V$ is a function
\[
\phi: \Omega_S\longrightarrow R.
\]
Let $\Phi\defeq \bigcup_{S\in\Pfin(V)} R^{\Omega_S}$ be the set of all such valuations. We define:
\begin{enumerate}
\item \emph{Labelling}: $d:\Phi\rightarrow \P(V)$, with $d(\phi)=S$ if $\phi\in\Phi_S$.
\item \emph{Combination}: $\otimes:\Phi\times\Phi\rightarrow \Phi$, where, for all $\phi,\psi\in\Phi$ and $\x\in\Omega_{d(\phi)\cup d(\psi)}$, we have
\[
(\phi\otimes \psi)(\x):=\phi\left(\projn{\x}{d(\phi)}\right)\cdot\psi\left(\projn{\x}{d(\psi)}\right).
\]
\item \emph{Projection}: $\downarrow:\Phi\times\P(V)\rightarrow \Phi$, where, for all $\phi\in \Phi$, $T\subseteq d(\phi)$ and $\x\in\Omega_T$, we have
\[
\proj{\phi}{T}(\x):=\sum_{{\substack{\y\in\Omega_S: \\ \projn{\y}{T}=\x}}}\phi(\y).
\]
\end{enumerate}
This valuation algebra is idempotent only when $R$ is idempotent. 
The neutral element $e_S\equiv 1$ is the function that assigns $1$ to each $\x\in\Omega_S$. The null element $z_S\equiv 0$ is the $0$-function. Semiring valuations are often referred to as \textbf{$R$-potentials}. In the case where $R=\mathbb{R}_{\ge 0}$, the corresponding valuation algebra is the one of \emph{probability potentials}\index{valuation algebra!valuation algebra of probability potentials|textbf} \cite{pouly2012generic}.


\subsection{Advanced examples: language and models}
In many situations, information concerns the validity of propositions formulated in a logical language $\L$. This idea can be captured in considerable generality: we suppose $\L$ is simply a set of \emph{sentences} concerning a basic set of variables $V$, regardless of its syntactic structure, and we assume there is a set $\Models$ of possible \emph{models} for such sentences. Given a subset $Q\subseteq V$, we denote by $\L_Q$ the sublanguage of $\L$ which only involves variables in $Q$. Similarly, let $\Models_Q$ be the set of models for sentences in $\L_Q$. For each $Q\subseteq V$ we assume a binary relation $\models_Q\subseteq \Models_Q\times \L_Q$, and we say that $m\in\Models_Q$ is a model for $s\in\L_Q$ if and only if $m\models_Q s$.
We also assume $\Models$ to be equipped with a projection function: for all $Q\subseteq U$,  $$\downarrow :\Models_U\rightarrow \Models_Q:: m\mapsto \projn{m}{Q}.$$ We require this map to behave essentially like cartesian projection \cite{pouly2012generic}.\footnote{More precisely, we assume $\Models$ to be a \emph{tuple system} \cite{pouly2012generic}, which is a generalisation of a cartesian product of sets equipped with the usual cartesian projection. For this reason, the projection of $\Models$ will be denoted exactly like cartesian projection. We shall not give a full list of the axioms of a tuple system as they are not needed in this paper.} This is needed in order to model a situation where an element $m\in\Models_U$ is a model for a sentence $s$ in $\L_Q$ when $Q$ is strictly contained in $U$. More specifically, we assume that, for any $Q\subseteq U$, $m\in\Models_U$ and $s\in\L_Q$, 
\[
m\models_U \iota(s) \Leftrightarrow \projn{m}{Q}\models_Q s,
\]
where $\iota: \L_Q\hookrightarrow\L_U$ is the inclusion function. This condition states that the projections $\downarrow$ and the inclusions $\iota$ constitute an \emph{infomorphism} \cite{pouly2012generic}, which is a common aspect of many instances of language and models, such as propositional and predicate logic.


The following will be useful for some of the examples in Section \ref{sec: disagreement}\ref{local agreement vs global disagreement}.
For all $Q\subseteq V$ and any $S\subseteq \L_Q$, let
\[
\Structure_Q(S):=\{m\in\Models_Q: m\models_Q s,\forall s\in S\}
\]
denote the set of models for sentences in $S$. Similarly, for all $M\subseteq \Models_Q$, let
\[
\Theory_Q(M):=\{s\in\L_Q: m\models_Q s,\forall m\in M\}.
\]
be the $Q$-theory of $M$, \ie the set of sentences in $\L_Q$ satisfied by all models in $M$. 

With this premise, one can define the valuation algebra of \textbf{information sets} over the family $\{\tuple{\L_Q,\Models_Q,\models_Q}\}_{Q\subseteq V}$ as follows. The set of variables is $V$, and the elements of the algebra are sets of models $M\subseteq \Models_Q$, where $Q\subseteq V$ is finite. Let $\Phi\defeq\bigcup_{Q\in\Pfin(V)}\P(\Models_Q)$. The operations are 
\begin{itemize}
\item \emph{Labelling}: Given an information set $M\subseteq\Models_Q$, its label is defined to be $d(M)=Q.$
\item \emph{Combination}: For all $M_1\subseteq \Models_Q$ and $M_2\subseteq \Models_U$, let
\[
M_1\otimes M_2:=M_1\Join M_2=\{m\in\Models_{Q\cup U}: \projn{m}{Q}\in M_1 \wedge \projn{m}{U}\in M_2\}.
\]
\item \emph{Projection}: Given an information set $M$ and a domain $Q\subseteq d(M)$, we define
\[
\proj{M}{Q}:=\{\projn{w}{Q}: w\in M\}.
\]
\end{itemize}
The neutral elements of the algebra are the sets $e_Q=\Models_Q$, while the null elements are $z_Q=\emptyset$. The algebra is clearly idempotent, hence it is an information algebra. 

This general discussion leads to valuation algebraic representations of propositional logic, predicate logic, linear equation systems, systems of linear inequalities, any many other examples \cite{wilson1999logical, pouly2012generic}. One can also show that the algebra of relational databases is also captured by this general setting. For the purposes of this paper, we are particularly interested in the instances of propositional logic and constraint satisfaction problems, which we shall now briefly review.

\subsubsection*{Propositional logic} Suppose the language $\L$ is propositional logic over a countable set $P$ of propositional symbols. Sentences are well-formed propositional formulae with variables in $P$. The models $\Models$ are just truth assignments, \ie maps $v:P\rightarrow\{0,1\}$. Therefore, $\Models_Q$ is comprised of assignments $w:Q\rightarrow\{0,1\}$, while the relation $\models_Q$ is given by the usual semantics of classical propositional logic. 
The projection function of $\Models$ is given by function restriction: any assignment $w:Q\rightarrow\{0,1\}$ can be restricted to $w|_U:U\rightarrow\{0,1\}$ for any $U\subseteq Q$. This allows to define the information algebra of propositional information sets, where combination of $M_1\subseteq \Models_Q$ and $M_2\subseteq \Models_U$ is
\[
M_1\otimes M_2:=\{w\in\Models_{Q\cup U}: w|_Q\in M_1 \wedge w|_U\in M_2\}.
\]
while projection of $M\subseteq \Models_Q$ to a subset $U\subseteq Q$ is given by $\proj{M}{U}:=\{w|_U: w\in M\}.$

\subsubsection*{Constraint satisfaction problems}\index{constraint satisfaction problem|textbf}
Constraint satisfaction problems (CSP) are particularly useful to model various kinds of problems in mathematics and computer science thanks to their general and versatile formulation. A CSP is a triple $\tuple{X, D, C}$, where $X=\{x_1,\dots x_n\}$ is a set of \emph{variables}, $D=\{D_1,\dots D_n\}$ is the set of the respective \emph{domains}\footnote{This shall not be confused with the term \emph{domain} of the valuation algebra formalism. Rather, the domain of a variable in this setting is analogue to the concept of frame in valuation algebras.} of values, and $C=\{c_1,\dots c_m\}$ is a set of \emph{constraints}. Each variable $x_i$ can take values in its domain $D_i$. A constraint $c_i\in C$ is a pair $\tuple{T_i,R_i}$, where $T_i:=\{x_{i_1},\dots,x_{i_k}\}\subseteq X$ and $R_i$ is a $k$-ary relation $R\subseteq\prod_{j=1}^kD_{i_j}$. The set $T_i$ is also called the \emph{scheme} of $c_i$, and it is denoted by $\Scheme{c_i}$, whereas $R_i:=\relation{c_i}$.
An \emph{evaluation} of the variables is a map $\vi: S\longrightarrow \coprod_{x_i\in S}D_i$
that assigns to each variable $x_j\in S$ a value $v(x_j)$ in its domain $D_j$. Such a map can be seen as an element of $
D_S:=\prod_{x_i\in S}D_i.$
For each $S\subseteq X$, we define $\Models_S$ to be the set of all evaluations on $S$, \ie $\Models_S:=D_S$. The projection of models is simple cartesian projection. The associated language $\L_S$ is defined by
\[
\L_S:=\{c_i\in C: \Scheme{c_i}\subseteq S\}. 
\]
We say that an evaluation $\vi$ of the variables in $S$ satisfies a constraint $c=\tuple{T, R}$, if
$\projn{\vi}{S\cap T}\in R$.  An evaluation is called \emph{consistent} when it does not violate any constraint. It is called \emph{complete} if it includes all variables in $X$. It is called a \emph{solution} if it is consistent and complete. 

Satisfiability of a constraint defines a relation $\models_S$ for each subset $S\subseteq X$: given $\vi\in\Models_S$ and $c\in \L_S$, 
\[
\vi\models_S c\Longleftrightarrow S\cap \Scheme{c}=\emptyset \text{ or }\projn{\vi}{S\cap \Scheme{c}}\in\relation{c}.
\]
Thus we obtain a family $\{\tuple{\L_S,\Models_S,\models_S}\}_{S\subseteq X}$. This allows to define the algebra of information sets, where combination between $M_1\subseteq \Models_S$ and $M_2\subseteq \Models_U$ is
\[
M_1\otimes M_2:=\{\vi\in\Models_{Q\cup U}: \projn{\vi}{Q}\in M_1 \wedge \projn{\vi}{U}\in M_2\},
\]
while projection of $M\subseteq \Models_Q$ to a subset $U\subseteq Q$ is given by $\proj{M}{U}:=\{\projn{\vi}{U}: \vi\in M\}.$

\subsection{Inference problems}

Since valuations model pieces of information, we are naturally drawn to formulate the classic problem of extracting relevant knowledge about a given query from a  knowledgebase. In the valuation algebra theory, such a task is called an \emph{inference problem}, and is formally defined as follows:

\begin{defn}
Given a valuation algebra $\Phi$, a knowledgebase $\{\phi_1,\dots,\phi_n\}\subseteq \Phi$, and a domain $D\subseteq d(\phi_1\otimes\cdots\otimes \phi_n)$, we call an \KEYbf{inference problem} the task of computing 
\[
\proj{\phi}{D}=\proj{(\phi_1\otimes\dots\otimes\phi_n)}{D}.
\]
The valuation $\phi=(\phi_1\otimes\dots\otimes\phi_n)$ is called \textbf{joint valuation}\index{inference problem!joint valuation of|textbf} or \textbf{objective function}, while $D$ is called a \textbf{query}. 
\end{defn}

Many  natural questions in mathematics and computer science can be easily reformulated as inference problems. This level of generality has sparked the development of \emph{generic inference} \cite{Shenoy:1989aa, shafer1991local, kohlas2003information, pouly2008generic, PoulySoftware2010, pouly2012generic}, a theory that aims to produce high-level algorithms to solve inference problems, which can then be applied to a wide range of situations. 

\section{A theory of disagreement}\label{sec: disagreement}
Inference problems capture the essence of information as a carrier of knowledge used to answer specific questions. 
However, they do not take into account another fundamental concept related to knowledge: disagreement. We will now propose a general, natural definition of disagreement in the valuation algebraic language. Then, we will show that this notion elegantly encapsulates non-locality and contextuality. 

\subsection{Defining disagreement}
Consider a valuation algebra $\Phi$, on a set of variables $V$. 
A natural way to say that two valuations $\phi$ and $\psi$ in $\Phi$ agree is to say that they provide exactly the same information when restricted to their common variables. By directly translating this idea in symbols, we say that $\phi_i$ and $\phi_j$ \textbf{agree} if
\begin{equation}\label{equ: local agreement}
\proj{\phi}{d(\phi)\cap d(\psi)}=\proj{\psi}{d(\phi)\cap d(\psi)}
\end{equation}
Now, consider a situation where more than two valuations are involved: let $K=\{\phi_1,\dots,\phi_n\}\subseteq \Phi$ be a knowledgebase. One way to generalise the definition above would be to say that the valuations in $K$ agree if they agree pairwise, a condition which we call \textbf{local agreement}. However, local agreement might not be enough to indisputably conclude that $\phi_1,\dots,\phi_n$ agree on everything. Indeed, although the valuations may provide the same information on \emph{common} variables, they do not necessarily share the same \emph{global} information on \emph{all} of the variables. For instance, different people might agree with each other on specific subjects, while not sharing the same \emph{global} opinion on \emph{all} of the subjects.
In order to model these situations, we say that $K$ is a \textbf{globally agreeing} knowledgebase if $\phi_i,\dots,\phi_n$ share a common global opinion, that is if there exists a valuation $\gamma\in \Phi_X$, where $X=\bigcup_{i=1}^n d(\phi_i)$, such that
\begin{equation}\label{equ: agreement}
\proj{\gamma}{d(\phi_i)}=\phi_i,~\forall 1\leq i\leq n. 
\end{equation}
Concretely, this means that the information carried by each individual valuation $\phi_i$ comes as a restriction of a `truth' valuation $\gamma$ which is implicitly agreed upon by all the sources. Given this premise, we say that $\phi_1,\dots,\phi_n$ \textbf{disagree} if such a global valuation $\gamma$ does not exist. 

Notice that, as one would expect, global agreement implies local agreement, indeed, for all $1\leq i,j\leq n$,
\[
\proj{\phi_i}{d_i\cap d_j}\stackrel{\eqref{equ: agreement}}{=}\proj{\left(\proj{\gamma}{d_i}\right)}{d_i\cap d_j}\stackrel{\ref{semigroup}}{=}\proj{\gamma}{d_i\cap d_j}\stackrel{\text{(A4)}}{=}\proj{\left(\proj{\gamma}{d_j}\right)}{d_i\cap d_j}\stackrel{\eqref{equ: agreement}}{=}\proj{\phi_j}{d_i\cap d_j}.
\]
where $d_{i,j}\defeq d(\phi_{i,j})$.

\subsection{Local agreement vs global disagreement}\label{local agreement vs global disagreement}
Local disagreement is generally easy to spot since it arises as a direct contradiction between two information sources. A much more subtle scenario arises when sources agree locally, yet disagree globally.
In this section, we present some interesting real-world examples of this kind of disagreement using different valuation algebras. 

\begin{exmp}
Our first example is inspired by a recent paper by Zadrozny \& Garbayo \cite{zadrozny2018sheaf}, where breast cancer screening guidelines from three different accredited sources are analysed. The following protocols are obtained by slightly tweaking the instruction contained therein. 

\begin{enumerate}
\item \emph{Screening with mammography annually, clinical breast exam annually or biannually}\label{equ: one}
\item \emph{Women aged 50 to 54 years should get mammograms. Women aged 55 years and older should switch to clinical breast exams}\label{equ: two}
\item \emph{Women aged 50 to 54 years should undergo an exam every year. Women aged 55 years and older should be examined every 2 years}\label{equ: three}
\end{enumerate}

We can represent the information provided by these medical bodies as a knowledgebase of the information algebra of relational databases. We have the variables $\{a,e,f\}$ representing age intervals, exam type and exam frequency respectively. The frames for each variable are $\Omega_a=\{54^-, 54^+\}$, $\Omega_e=\{\text{MG}, \text{CBE}\}$, $\Omega_f=\{\text{Y},\text{2Y}\}$. Each source $i=1,2,3$ from the example can be described by 3 relations, defined by
\[
\begin{split}
R_1 &=\left\{\tuple{\text{M}, \text{Y}}, \tuple{\text{CBE}, \text{Y}}, \tuple{\text{CBE}, \text{2Y}}\right\},\\
R_2 &=\left\{\tuple{54^-, \text{M}},\tuple{54^+, \text{CBE}}\right\},\\
R_3 &=\left\{\tuple{54^-, \text{Y}},\tuple{54^+, \text{2Y}}\right\},
\end{split}
\]
with $d(R_1)=\{e,f\}$, $d(R_2)=\{a,e\}$ and $d(R_3)=\{a,f\}$. It is easy to see that all these sources agree locally, \ie they do not directly contradict each other. For instance, $\proj{R_1}{\{e\}}=\{\text{M}, \text{CBE}\}=\proj{R_2}{\{e\}}$. However, we can show that they disagree globally. To see this, we combine $R_1,R_2,R_3$:
\[
G:=R_1\otimes R_2\otimes R_3=\left\{\tuple{\text{M},\text{Y},54^-},\tuple{\text{CBE},\text{2Y},54^+}\right\}.
\]
Later, in Proposition \ref{prop: truth}, we will show that if a truth valuation exists, then $G$ is one of them. Now, if we project $G$ back onto $d(R_1)$, we have
\[
\proj{G}{d(R_1)}=\{\tuple{\text{M}, \text{Y}}, \tuple{\text{CBE}, \text{2Y}}\},
\]
which is different from $R_1$, since $\proj{G}{d(R_1)}$ does not include $\tuple{\text{CBE}, \text{Y}}$. We conclude that a truth valuation does not exist, and thus $R_1,R_2,R_3$ disagree globally, despite satisfying local agreement. This means that it is impossible to follow all the guidelines simultaneously. More specifically, since $\tuple{\text{CBE}, \text{Y}}\notin \proj{G}{d(R_1)}$, it is impossible to follow both guidelines \ref{equ: two} and \ref{equ: three}, if one opts to undergo clinical breast exams annually, as suggested by \ref{equ: one}. 
%
%
%
This is a very simple example of a real situation where information sources disagree in a very subtle way. It is also an example of a database which does not admit a universal relation, a property that has been proved to be equivalent to contextuality in \cite{Abramsky12:databases}.
\end{exmp}

\begin{exmp}\label{exmp: Malawi}
Consider the following problem. We want to colour a political map of the geographical region surrounding Malawi using 3 colours -- say red, green and yellow -- with the condition that adjacent countries should be coloured differently.  A blank map is pictured in Figure \ref{fig: Malawi}. 
\begin{figure}[htbp]
\RawFloats
\centering
\begin{minipage}[c]{.40\textwidth}
\centering
\includegraphics[scale=0.4]{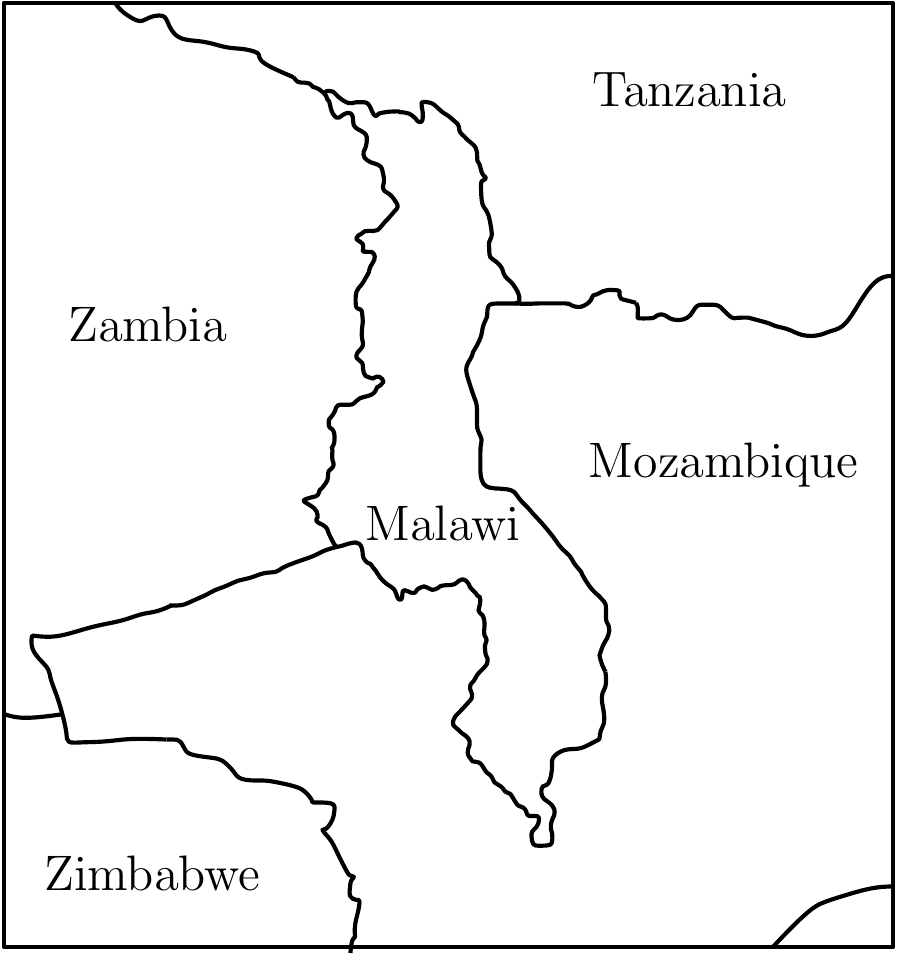}
\caption{A blank map of the geographical region surrounding Malawi.}\label{fig: Malawi}
\end{minipage}%
\hspace{10mm}%
\begin{minipage}[c]{.40\textwidth}
\centering
\includegraphics[scale=0.5]{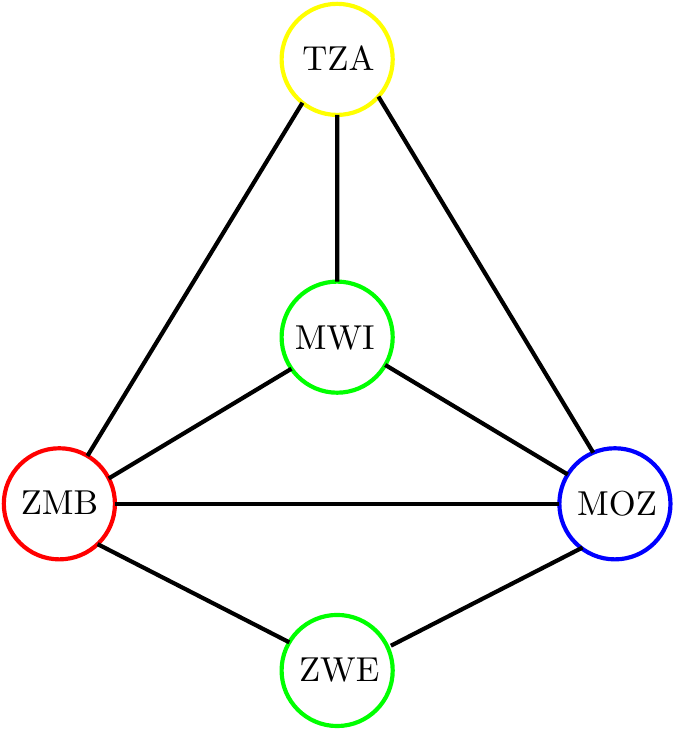}
\caption{The constraint graph of the CSP. A $4$-colouring of the graph is shown. The chromatic number of the graph is $4$.}\label{fig: coloring}
\end{minipage}
\end{figure}

\end{exmp}
We can model this problem as a CSP. The set of variables $V$ is constituted by a variable for each of the 5 countries in the map, \ie $V=\{\mathsf{MOZ}, \mathsf{MWI}, \mathsf{TZA}, \mathsf{ZMB},\mathsf{ZWE}\}$
The domain for each variable is the set of colours we can attribute to the country, \ie $D=\{g,r,y\}$, and it is the same for each variable. Let $S:=\{\tuple{g,r},\tuple{g,y},\tuple{r,y}\}.$
There are $8$ constraints $c_i$, $1\leq i\leq 8$ defined by $\relation{c_i}=S$ for all $1\leq i\leq 8$ and 
\begin{gather*}
T_1=\{\mathsf{MOZ}, \mathsf{MWI}\}, T_2=\{\mathsf{MOZ}, \mathsf{TZA}\}, T_3=\{\mathsf{MOZ}, \mathsf{ZMB}\}, T_4=\{\mathsf{MOZ}, \mathsf{ZWE}\},\\ T_5=\{\mathsf{MWI},\mathsf{TZA}\}, T_6=\{\mathsf{MWI},\mathsf{ZWE}\}, T_7=\{\mathsf{TZA},\mathsf{ZMB}\}, T_8=\{\mathsf{ZMB},\mathsf{ZWE}\},
\end{gather*}
where $T_i=\Scheme{c_i}$ for all $1\leq i\leq 8$. Consider the valuation algebra of information sets for CSPs. The knowledgebase for this problem is comprised of $8$ valuations $\{\phi_i\}_{i=1}^8$ defined by
\begin{gather*}
\phi_i:=\M_{T_i}(C)=\{\vi\in D_{T_i}: \vi\models_{T_i}c_j,~\forall 1\leq j\leq 8\}.
\end{gather*}
For instance, $\phi_1=\{\tuple{\mathsf{MOZ}, \mathsf{MWI}}\mapsto\tuple{g,r}, \text{ or } \tuple{g,y}, \text{ or }\tuple{r,y}\}.$
It is easy to see that all the valuations agree locally, indeed, for all $1\leq i,j\leq 8$, we either have $T_i\cap T_j=\emptyset$, in which case local agreement is trivially satisfied, or $T_i\cap T_j$ has exactly one element $t$, and we have
\[
\proj{\phi_i}{T_i\cap T_j}=\{t\mapsto g \text{ or } r \text{ or } y\}=\proj{\phi_j}{T_i\cap T_j}.
\]
However, the $\phi_i$'s disagree globally. Indeed, if there was a valuation $\gamma$ which projects onto each $\phi_i$, it would imply that there is a solution to the CSP, and thus a colouring of the map using only three colours. It is easy to see that this is not the case by simply looking at the \emph{constraint graph} (cf. Figure \ref{fig: coloring}) of the problem, and conclude that its chromatic number is $4$. 


\begin{exmp}\label{exmp: liar}
To cite an example from logic, consider the the famous liar's paradox, whose structure has been linked to contextuality in \cite{AbramskyEtAl:CCP2015}. The standard paradox consists of the sentence:
\[
S: S \text{ is false}. 
\]
We can generalise this to a \emph{liar cycle} of length $n$, \ie a sequence of statements:
\[
S_1 : S_2 \text{ is true}, \quad\quad S_2 : S_3 \text{ is true},\quad\quad\dots\quad\quad S_{n-1} : S_n \text{ is true},\quad\quad S_n : S_1 \text{ is false}.
\]
These statements can be modelled as a series of formulae in propositional logic. Let $V:=\{s_1,\dots,s_n\}$ be a set of variables, each representing one of the statements $S_i$ above. The $n$ liar cycle can be rewritten as follows:
\begin{equation}\label{equ: liar}
s_1 \leftrightarrow s_2, \quad\quad s_2 \leftrightarrow s_3,\quad\quad\dots\quad\quad s_{n-1} \leftrightarrow s_n,\quad\quad s_n \leftrightarrow \neg s_1.
\end{equation}
We define the following valuations: 
\[
\phi_n:=\M_{\{s_1,s_n\}}(\{s_n\leftrightarrow\neg s_1\})=\{\vi:: \tuple{s_1,s_n}\mapsto \tuple{1,0} \text{ or } \tuple{0,1}\},
\]
and for all $1\leq i\leq n-1$,
\[
\phi_i:=\M_{\{s_i,s_{i+1}\}}(\{s_i\leftrightarrow s_{i+1}\})=\{\vi::\tuple{s_i,s_{i+1}}\mapsto \tuple{0,0} \text{ or }\tuple{1,1}\}.
\]
It is easy to see that the $\phi_i$'s agree locally. Indeed, both $0$ and $1$ are valid assignments for a single variable, which is the most two valuations can have in common. 
On the other hand, the fact that the liar cycle gives rise to a paradox means that the valuations do not agree globally. Indeed, a global assignment of truth values to each $s_1,\dots, s_n$ consistent with formulae \eqref{equ: liar} is impossible, as they collectively yield $s_1\leftrightarrow\neg s_1$. 
\end{exmp}

\section{Non-locality and contextuality}\label{sec: NL C}

We shall now expose a surprising connection between the valuation algebraic theory of disagreement with a completely different topic, namely the study of non-locality and contextuality in quantum foundations. This will show that all of the examples of local agreement vs global disagreement we have just presented are in fact mathematically equivalent to contextuality. We will adopt the sheaf-theoretic description of contextuality developed in \cite{AbramskyBrandenburger}. We will now review the main definition, assuming basic knowledge of sheaf theory and category theory.

\subsection{The sheaf-theoretic structure of non-locality and contextuality}
Our starting point is a simple idealised experiment, depicted in Figure \ref{fig: fig1}.
\begin{figure}[htbp]
\begin{floatrow}
\ffigbox{%
\includegraphics[scale=0.4]{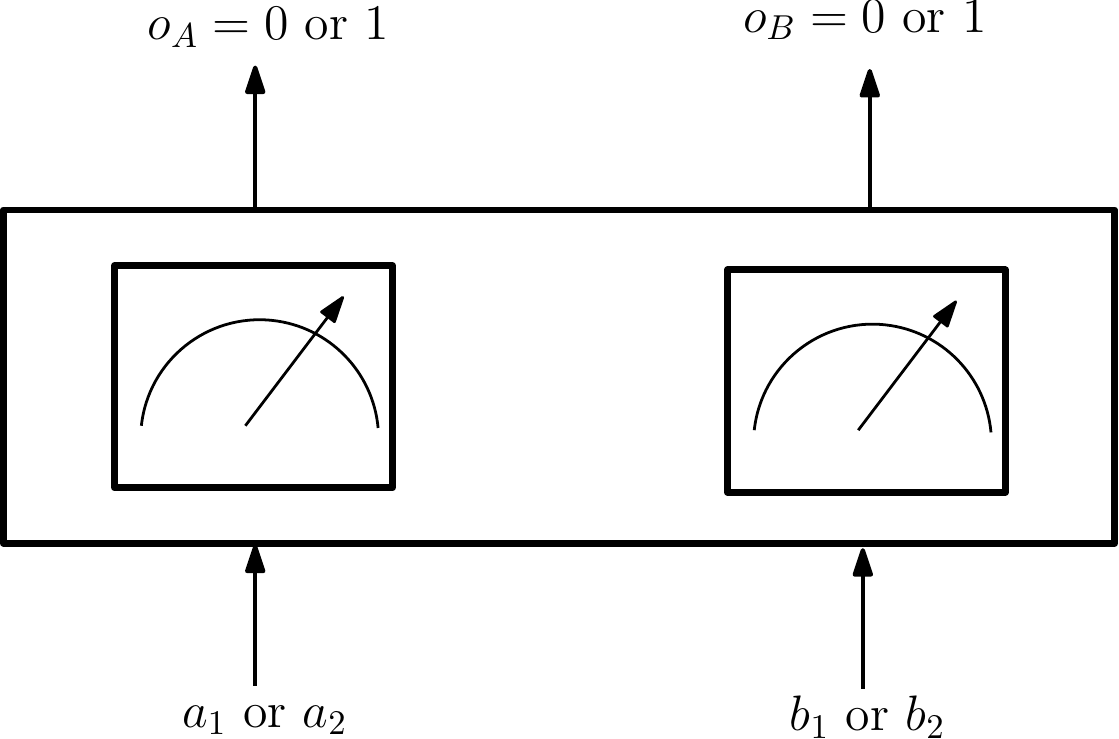}%
}{%
  \caption{A schematic representation of the experiment. 
  }\label{fig: fig1}%
}
\quad
\capbtabbox{%
\begin{tabular}{c c | c c c c}
\hline
$A$ & $B$ & $(0,0)$ & $(1,0)$ & $(0,1)$ & $(1,1)$\\
\hline
$a_1$ & $b_1$ & $\nicefrac{1}{2}$ & $0$ & $0$ & $\nicefrac{1}{2}$\\[2pt]
$a_1$ & $b_2$ & $\nicefrac{3}{8}$ &  $\nicefrac{1}{8}$ &  $\nicefrac{1}{8}$ &  $\nicefrac{3}{8}$\\[2pt]
$a_2$ & $b_1$ &  $\nicefrac{3}{8}$ &  $\nicefrac{1}{8}$ &  $\nicefrac{1}{8}$ &  $\nicefrac{3}{8}$\\[2pt]
$a_2$ & $b_2$ &  $\nicefrac{1}{8}$ &  $\nicefrac{3}{8}$ &  $\nicefrac{3}{8}$ &  $\nicefrac{1}{8}$\\[2pt]
\end{tabular}
}{%
  \caption{Bell's empirical model}\label{tab: Bell}%
}
\end{floatrow}
\end{figure}
Suppose there are two agents, Alice and Bob, who can each independently select one of two different measurements to perform on their respective share of a given system: $a_1, a_2$ for Alice, $b_1, b_2$ for Bob. 
After a measurement is performed, one of two possible outcomes, $0$ or $1$, is observed.
Note that only one measurement per party can be chosen at a time, so that there are four possible sets of jointly performable measurements, or \emph{contexts}: $\{a_1,b_1\}$, $\{a_1,b_2\}$, $\{a_2,b_1\}$, $\{a_2,b_2\}$. When a context is chosen, and measurement outcomes $o_A, o_B\in \{0,1\}$ are observed, the corresponding \emph{event} can be modelled as a tuple $\mathbf{e}=\tuple{o_A, o_B}\in O\times O$, where $O\defeq\{0,1\}$ is the \emph{set of outcomes}. 
Repeated runs of the experiment allow relative frequencies to be tabulated, which can be described as a family of probability distributions $e_C: \E(C)\rightarrow \RR_{\ge 0}$ over the events $\E(C)\defeq\prod_{m\in C}O$ of each context $C=\{a_i,b_j\}$. We shall call such a family an \emph{empirical model}. A key example of an empirical model is Bell's model \cite{Bell-thm, CHSH,Bell:Speakable}, which is displayed in Table \ref{tab: Bell}.
The probability distributions observed in Bell's model are the ones predicted by quantum theory in a particular experiment \cite{CHSH}. 
As such, they verify the \emph{no-signalling} or \emph{no-disturbance condition}, a property of all distributions arising from quantum experiments, which states that the statistics observed by one party do not depend on the other party's choice of measurement. In essence, this means that the marginals of the distributions at each context agree at their intersection.

The key feature of Bell's model is that it is \emph{contextual} or, more precisely, \emph{non-local}. While non-locality was originally formulated in terms of the non-existence of a local hidden-variable theory, it was observed originally by Fine \cite{Fine82}, and subsequently shown in a much more general form in \cite{AbramskyBrandenburger}, that this is equivalently formulated as the  impossibility of finding a global probability distribution $d: \E(X)\rightarrow \RR_{\ge 0}$ such that each empirically observed distribution $e_C$ corresponds to the marginal of $d$ with respect to the events in $C$. Hence, the properties of the system we are measuring are not predetermined, but rather depend on the measurements we choose to perform. This phenomenon cannot be witnessed in classical physics, and constitutes one of the most peculiar features of quantum theory.

By abstracting from this particular example, one can formulate an extremely general definition of non-locality and contextuality using sheaf theory. We begin by defining a general \textbf{measurement scenario} as a triple $\tuple{X, \M, (O_m)_{m\in X}}$, where $X$ is a finite set of measurement labels, $\M\subseteq \P(X)$ is a \textbf{measurement cover}, \ie the collection of \textbf{contexts}, 
and $O_m$ is the \textbf{set of outcomes} for measurement $m\in X$. 
For instance, the scenario above has $X=\{a_1,b_1,a_2,b_2\}$, $\M=\{\{a_i,b_j\}: i,j=1,2\}$ and $O_m=O=\{0,1\}$ for all $m\in X$. This is the canonical example of a so-called \textbf{Bell-type scenario}, which is a class of scenarios involving several experimenters, who can each choose to perform one measurement at a time. In a Bell-type scenario with $n$ parties, 
$X$ can be partitioned into $n$ subsets $X_1,\dots, X_n$, where $X_i$ represents the measurements available to the $i$-th party, and the contexts are of the form $\M=\{\{m_1,\dots, m_n\}: m_i\in X_i,~\forall 1\leq i\leq n\}$. 

The events of a scenario are defined by the \textbf{sheaf of events} $\E:\P(X)^{op}\rightarrow \Set$, where we have $\E(U) \defeq \prod_{m\in U}O_m$ for all $U\subseteq U'$, and the restriction maps are given by cartesian projection: for all $U\subseteq U'\subseteq X$,
\[
\E(U\subseteq U') \defeq \rho_U^{U'}: \E(U')\longrightarrow\E(U):: \mathbf{x}\longmapsto \mathbf{x}|_{U}\defeq\projn{\mathbf{x}}{U}.
\] 
It is easy to verify that $\E$ is indeed a sheaf.

Now that we have abstracted the concept of measurement scenario, we want to formally capture empirical models. In order to describe probability distributions in the most general sense, we will consider distributions over any semiring. Given a semiring $R$, we define the \textbf{$R$-distribution functor} $\D_R:\Set\rightarrow\Set$ as follows: for any set $A$ and any function $f:A\rightarrow B$, 
\[
\begin{split}
\D_R(S) &\defeq\left\{d:S\rightarrow R: |\supp(d)|<\infty \text{ and } \sum_{s\in S}d(s)=1 \right\}\\
\D_R(f) &:\D_R(A)\longrightarrow\D_R(B):: d\longmapsto \lambda b.\sum_{\substack{a\in A: \\ f(a)=b}}d(a).
 \end{split}
\]
With this premise, distributions over events of $\scenario$ can be modelled by the presheaf $\D_R\E\defeq \D_R\circ \E:\P(X)^{op}\rightarrow \Set$. Therefore, an \textbf{empirical model} will be comprised of a family $\{e_C\in\D_R(C)\}_{C\in\M}$ of local sections of $\D_R\E$. In order to capture the property of no-signalling, we require the family $\{e_C\}_{C\in\M}$ to be \textbf{compatible},\footnote{The concept of \emph{compatibility} introduced here shall not be confused with the notion of \emph{compatible measurements} of quantum theory, which describes jointly-performable measurements.} \ie such that $e_C|_{C\cap C'}=e_{C'}|_{C\cap C'}$ for all $C,C'\in\M$. In other words, for any event $\s\in\E(C\cap C')$, we have:
\[
e_C|_{C\cap C'}(\s)=\sum_{\substack{\x\in \E(C): \\ \x|_{C\cap C'}=\s}}e_C(\x)=\sum_{\substack{\y\in \E(C'): \\ \y|_{C\cap C'}=\s}}e_C(\y)=e_{C'}|_{C\cap C'}(\s),
\] 
which corresponds precisely to the property of no-signalling.

Different semirings describe different kinds of empirical models. When $R=\RR_{\ge 0}$ we say that the model is \textbf{probabilistic}; when $R=\BB$, it is \textbf{possibilistic}, which means that we are only concerned with whether events are possible or not, regardless of their probability of occurring. Notice that, by disregarding the value of individual probabilities, every probabilistic empirical model $\{e_C\in\D_{\RR_{\ge0}}\E(C)\}_{C\in\M}$ gives rise to a possibilistic model $\{\tilde{e}_C\in\D_\BB\E(C)\}_{C\in\M}$, where $\tilde{e}_C\defeq\chi_{\supp(e_C)}:\E(C)\rightarrow \{0,1\}$ is the characteristic function of the support of the distribution $e_C$. 

We say that an empirical model $\{e_C\}_{C\in \M}$ is \textbf{contextual} if there is no global section for the family $\{e_C\}_{C\in \M}$, that is, if there is no global distribution $d:\E(X)\rightarrow R$ such that $d|_C=e_C$ for all $C\in\M$. If the scenario is Bell-type, then we say that the model is \textbf{non-local}.\footnote{Note that the term \emph{local} reviewed here has a completely different meaning to the one introduced earlier in the notion of \emph{local agreement}}  Hence, non-locality is simply a special case of contextuality.

If an empirical model is possibilistic, we say that it is \textbf{logically contextual}. By extension, we say that a probabilistic model $\{e_C\}_{C\in\M}$ is logically contextual if its possibilistic collapse $\{\tilde{e}_C\}_{C\in\M}$ is logically contextual. It can be shown that this property is strictly stronger than regular (probabilistic) contextuality.

An even more restrictive kind of contextuality arises in some models. Suppose we have a global section for a possibilistic model $\{e_C\}_{C\in\M}$, \ie a distribution $d:\E(X)\rightarrow\BB$ such that $d|_C=e_C$ for all $C\in \M$. The support of $d$ is a set of global events $\g\in\E(X)$ which are compatible with the model, \ie such that $e_C(\projn{\g}{C})=1$ for all $C\in\M$. In extreme situations, not only it might not be possible to find such a global distribution $d$, but there might not even be any global event $\g$ compatible with the model at all. This property is called \textbf{strong contextuality}, and it is strictly stronger than logical contextuality. This leads to the following hierarchy of different strengths of contextuality:
\begin{equation}\label{equ: hierarchy}
\text{(probabilistic) contextuality}~~<~~\text{logical contextuality}~~<~~\text{strong contextuality}.
\end{equation}
Typical quantum examples of each of these levels are Bell's model (probabilistic), Hardy's model \cite{Hardy92:nonlocality1, Hardy93:nonlocality2} (logical) and the Greenberger-Horne-Zeilinger (GHZ) model \cite{GHZ, GHSZ90} (strong).

\subsection{Valuation algebras and sheaf theory}

Remarkably, many of the properties of valuation algebras can also be effectively captured by sheaf theory. Just like presheaves deal with the restriction and localisation of topological structures and their extendability through a `gluing' process, valuation algebras model the focus of knowledge and information, and represent the natural framework to study how local information can be extended through a `combination' process.

Let $\Phi$ be a valuation algebra on a set of variables $V$. We define a functor 
\begin{equation}\label{equ: presheaf}
\Phi: \P(V)^{op}\longrightarrow \Set,
\end{equation}
where $\Phi(S):=\Phi_S$ for all $S\subseteq V$, and restriction maps are given by the projector operator: for all $S\subseteq T\subseteq V$, we have
\[
\Phi(S\subseteq T)=\rho_S^T: \Phi_T\longrightarrow \Phi_S:: \phi\longmapsto \proj{\phi}{T}.
\]
We can easily check that $\Phi$ is a functor. Indeed, by \ref{projection}, we have, for all $S\subseteq V$ and for all $\phi\in\Phi_S$, 
\[
\rho_S^S(\phi)=\proj{\phi}{S}=\proj{\phi}{d(\phi)}\stackrel{\ref{projection}}{=}\phi,
\]
and, by \ref{transitivity}, for all $S\subseteq T\subseteq U\subseteq V$ and $\phi\in\Phi_U$,
\[
\rho_S^T\circ\rho_T^U(\phi)=\proj{\left(\proj{\phi}{T}\right)}{S}\stackrel{\ref{transitivity}}{=}\proj{\phi}{S}=\rho_S^U(\phi).
\]
This sheaf-theoretic perspective allows to capture the restriction or localisation of the information carried by a valuation algebra. 


It is important to keep in mind that the presheaf description of a valuation algebra only accounts for the restriction process. The combination operation is unique to each valuation algebra, although there are some general constructions, as we shall see later on in Section \ref{sec: disagreement and inference problems}\ref{sec: combination}.

With this premise, it is easy to convert the definitions of local and global agreement in sheaf-theoretic terms. A locally agreeing knowledgebase $K=\{\phi_1,\dots,\phi_n\}$ is nothing but a compatible family of local sections of $\Phi$. On the other hand, $K$ agrees globally if and only if there is a global section for $K$. 

\subsection{Contextuality and disagreement}\label{sec: disagreement and contextuality}

It is now time to reveal the link between contextuality and disagreement underpinning the theory presented thus far. We have purposely kept the connection implicit until now, so as to show how the theory of disagreement can be developed completely independently of the concept of contextuality.

\subsubsection*{No-signalling and local agreement.}
Let $e=\{e_C\in\D_R\E(C)\}_{C\in\M}$ be an empirical model over a measurement scenario $\scenario$. Let us take $X$ as a set of variables on which to build a suitable valuation algebra. for each measurement $m\in X$, we let $\Omega_m\defeq O_m$, so that $\Omega_C=\E(C)$ for any context $C\in\M$. 


We can interpret each $e_C$ as a valuation of the algebra $\Phi$ of $R$-potentials. From this viewpoint, the empirical model $e$ is nothing but a knowledgebase of $\Phi$. Then, the property of no-signalling -- \ie the compatibility of $\{e_C\}_{C\in\M}$ -- corresponds precisely to local agreement. Hence, to summarise, one can think of an empirical model as a locally agreeing knowledgebase of $\Phi$.

\subsubsection*{Contextuality and global disagreement.} 

By considering no-signalling empirical models as locally agreeing knowledgebases of the algebra of $R$-potentials, a striking connection with the theory of disagreement arises: non-locality and contextuality are just special instances of a locally agreeing knowledgebase which disagrees globally:

\begin{theorem}[See \ref{proof truth} for the proof]\label{thm: main connection}
Let $e=\{e_C\}_{C\in\M}$ be an empirical model. Then $e$ is contextual if and only if the locally-agreeing knowledgebase $K=\{e_C\}_{C\in\M}\subseteq \Phi$ disagrees globally.
\end{theorem}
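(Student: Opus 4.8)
The plan is to show that the two conditions are verbatim the same once the definitions are unwound: the valuation-algebra projection of $R$-potentials coincides with the marginalisation built into the distribution functor $\D_R\E$, and the normalisation constraint that distinguishes a genuine distribution from an arbitrary valuation turns out to be automatic. I take $\M$ to be a cover, so that $\bigcup_{C\in\M}C=X$ and the global domain of the knowledgebase $K=\{e_C\}_{C\in\M}$ is exactly $X$; thus a witness of non-contextuality and a witness of global agreement both live in $\Phi_X=R^{\E(X)}$.

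First I would verify that for any $\gamma\in\Phi_X$ and any $C\in\M$ the $R$-potential projection $\proj{\gamma}{C}$ coincides with the action $\gamma|_C$ of $\D_R$ on the cartesian projection $\rho_C^X$. Since $\Omega_m=O_m$ gives $\Omega_X=\E(X)$ and $\Omega_C=\E(C)$, both are given for $\s\in\E(C)$ by the same sum $\sum_{\x:\projn{\x}{C}=\s}\gamma(\x)$ over the fibre of $\rho_C^X$ above $\s$. Hence the restriction maps of $\D_R\E$ and the projection operator of $\Phi$ agree on the nose, and the condition $\proj{\gamma}{C}=e_C$ for all $C$ is identical to the defining condition $\gamma|_C=e_C$ of a global section.

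The one real point -- and the step I expect to be the crux -- is the mismatch in ambient objects: non-contextuality asks for a global \emph{distribution} $d\in\D_R\E(X)$, satisfying $\sum_{\x\in\E(X)}d(\x)=1$, while global agreement asks only for an arbitrary \emph{valuation} $\gamma\in\Phi_X$ with no normalisation imposed. I would close this gap by showing normalisation is free: fixing any $C\in\M$ (non-empty, as $\M$ is a cover) and summing the projection identity over $\s\in\E(C)$ gives
\[
\sum_{\x\in\E(X)}\gamma(\x)=\sum_{\s\in\E(C)}\ \sum_{\substack{\x\in\E(X):\\ \projn{\x}{C}=\s}}\gamma(\x)=\sum_{\s\in\E(C)}\proj{\gamma}{C}(\s)=\sum_{\s\in\E(C)}e_C(\s)=1,
\]
using that the fibres of $\rho_C^X$ partition $\E(X)$ and that $e_C$ is a distribution; finiteness of $\supp(\gamma)$ is immediate when $\E(X)$ is finite. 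Thus every $\gamma$ witnessing global agreement already lies in $\D_R\E(X)$. The biconditional then follows by contraposition: a global distribution $d$ witnessing non-contextuality is an element of $\Phi_X$ with $\proj{d}{C}=d|_C=e_C$, hence witnesses global agreement, and conversely a global-agreement witness $\gamma$ is, by the above, a bona fide distribution with $d|_C=\proj{\gamma}{C}=e_C$. So $e$ has a global section precisely when $K$ agrees globally, and negating both sides yields that $e$ is contextual if and only if $K$ disagrees globally.
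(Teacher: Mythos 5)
Your proposal is correct and follows essentially the same route as the paper's proof: both reduce the statement to the observation that the $R$-potential projection coincides with the marginalisation in $\D_R\E$, and both identify normalisation of the global-agreement witness as the only point needing an argument, resolved by summing over the fibres of the projection onto a context $C$ (the paper phrases this via projection to the empty domain and transitivity, which is the same rearrangement of the sum). Your explicit remark that the two restriction operations agree ``on the nose'' makes precise a step the paper leaves implicit, but the substance is identical.
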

%
Therefore, from a structural perspective, these counterintuitive phenomena of quantum physics are exhibiting exactly the same kind of behaviour as  the examples introduced in Section \ref{sec: disagreement}, and, indeed, as any instance of local agreement vs global disagreement arising from the valuation algebra framework. This result is a major generalisation of the connections observed in \cite{Abramsky12:databases}, \cite{AbramskyGottlobKolaitis2013:RobustConstraintSatisfactionAndLHV}, and \cite{AbramskyEtAl:CCP2015} which are limited to relational databases, CSPs, and logical paradoxes respectively. It further proves that contextuality is not a phenomenon limited to quantum physics, but it is a general concept which pervades various domains, most of which are completely unrelated to quantum theory. 

This connection can be further explored to include all the levels of strength of contextuality presented in \eqref{equ: hierarchy}, as we shall see in Section \ref{sec: disagreement and logical contextuality}. Moreover, it paves the way for the application of methods of generic inference to the study of contextuality via the reformulation of the problem of detecting contextuality as an inference problem, which is also presented in Section \ref{sec: disagreement and logical contextuality}.


%
%
%
%
%
%
%
%
%


\section{Disagreement, complete disagreement and inference problems}\label{sec: disagreement and inference problems}

Studying global disagreement amounts to looking for a global truth, which is shared by all the sources of information. It is thus natural to ask whether it is possible to recover it from the collective information of the sources and the structure of the valuation algebra. It turns out that, in a variety of situations, the global truth valuation can appear only in one form, which makes the problem of finding it significantly easier and, crucially, equivalent to an inference problem. In order to prove this, we will need to introduce the concept of an \emph{ordered valuation algebra}.


\subsection{Ordered valuation algebras}\label{sec: ordered valuation algebras}
Given a valuation algebra $\Phi$ on a set of variables $V$, and two valuations $\phi,\psi\in\Phi_S$ for some $S\subseteq V$, one could raise the following question: \emph{how does the information carried by $\phi$ compare to the one carried by $\psi$?} \emph{Is there a way of quantifying the amount of information represented by a valuation?} The answer to this question is given by extending the present framework to the one of \emph{ordered valuation algebras} \cite{haenni2004ordered}. An ordered valuation algebra is a valuation algebra equipped with a completeness relation $\preceq$, which aims to capture how informative a valuation is with respect to others.

\begin{defn}
Let $\Phi$ be a valuation algebra with null elements on a set of variables $V$. Then, $\Phi$ is an \textbf{ordered valuation algebra}\index{valuation algebra!ordered valuation algebra|textbf} if there exists a partial order $\preceq$ on $\Phi$ such that the following additional axioms are verified:

\begin{itemize}
\item[(A10)] \emph{Partial order}: For all $\phi,\psi\in\Phi$, $\phi\preceq\psi$ implies $d(\phi)=d(\psi)$. Moreover, for every $S\subseteq V$ and $\Psi\subseteq\Phi_S$, the infimum $\inf(\Psi)$ exists. 
\item[(A11)] \emph{Null element}: For all $S\subseteq V$, we have $\inf(\Phi_S)=z_S.$
\item[(A12)] \emph{Monotonicity of combination}: For all $\phi_1,\phi_2,\psi_1,\psi_2\in\Phi$ such that $\phi_1\preceq\phi_2$ and $\psi_1\preceq\psi_2$ we have $\phi_1\otimes\psi_1\preceq\phi_2\otimes\psi_2.$
\item[(A13)] \emph{Monotonicity of projection}: For all $\phi,\psi\in\Phi$, if $\phi\preceq\psi$ then $\proj{\phi}{S}\preceq\proj{\psi}{S},$
for all $S\subseteq d(\phi)=d(\psi)$. 
\end{itemize}
\end{defn}
It can be shown that all the instances of valuation algebras presented in the previous sections can be ordered. For instance, the algebra of relational databases has an order structure simply given by inclusions.
%
%
We can incorporate some of the axioms in the structure of the presheaf \eqref{equ: presheaf} by simply rewriting it as $
\Phi:\P(V)^{op}\longrightarrow \Pos$,
where $\Pos$ denotes the category of posets and monotone maps. 

\subsection{A general construction for composition}\label{sec: combination}

An interesting aspect brought to light by the order structure of valuation algebras is that the composition laws of many algebras are uniquely characterised by the same categorical construction. 

Let $\Phi$ be an ordered valuation prealgebra, viewed as a presheaf. 
Thanks to the universal property of products of the category $\Set$, we have, for all $S,T\subseteq V$, the following diagram:
\begin{center}
\begin{tikzpicture}[auto]
\matrix (m) [matrix of math nodes, row sep=6em, column sep=6em, text height=1.5ex, text depth=0.25ex]
{
\Phi(S) & \Phi(S)\times\Phi(T) & \Phi(T) \\
 & \Phi(S\cup T) & \\
};

\path[-stealth, font=\scriptsize]
(m-1-2) edge node[swap] {$\pi_1$} (m-1-1)
edge node[auto] {$\pi_2$} (m-1-3)
(m-2-2) edge node[auto] {$\rho_S^{S\cup T}$} (m-1-1)
edge node[swap] {$\rho_T^{S\cup T}$} (m-1-3);

\path[->,dashed, font=\scriptsize]
(m-2-2) edge node[auto] {$\substack{\tuple{\rho_S^{S\cup T},\rho_T^{S\cup T}}\\ ~~ \\ ~~ \\~~ \\~~ \\~~ \\~~ \\~~ }$} (m-1-2);
\end{tikzpicture}
\end{center}
This leads to the following definition:
an \textbf{adjoint valuation algebra}\index{valuation algebra!adjoint valuation algebra|textbf} is an ordered valuation algebra $\Phi$ such that its combination operation $\otimes$ is the right adjoint of the map $\tuple{\rho_S^{S\cup T},\rho_T^{S\cup T}}$, defined in the diagram above. In this case, $\otimes$ is the unique map such that
\begin{equation}\label{equ: adjoint equ}
\mathsf{id}_{\Phi(S\cup T)}\leq \otimes\circ\tuple{\rho_S^{S\cup T},\rho_T^{S\cup T}}, ~~~~~~~ \tuple{\rho_S^{S\cup T},\rho_T^{S\cup T}}\circ\otimes\leq\mathsf{id}_{\Phi(S)\times \Phi(T)},
\end{equation}
where $\leq$ is the pointwise order inherited by the partial order $\preceq$ of the algebra.

Adjoint valuation algebras are extremely common: relational databases, CSPs, propositional logic, predicate logic, linear equations, linear inequalities, can all be proven to be adjoint (see Proposition \ref{prop: appendix} in \ref{proof truth}\ref{sec: adjoint info algebras}).

\subsubsection{Detecting disagreement is an inference problem}\label{sec: disagreement IP}
The most important aspect of adjoint valuation algebras is that, given a globally agreeing knowledgebase, if a truth valuation exists, then it can be obtained by simply combining all the available information. This is the content of the following proposition, which generalises Proposition 2.3 in \cite{Abramsky12:databases}. 

\begin{proposition}[See \ref{proof truth} for the proof]\label{prop: truth}
Let $\Phi$ be an adjoint valuation algebra on a set of variables $V$. Let $K=\{\phi_1,\dots,\phi_n\}\subseteq\Phi$ be a knowledgebase. Let $\gamma=\bigotimes_{i=1}^n \phi_i.$
Then $\phi_1,\dots,\phi_n$ agree globally if and only if $\proj{\gamma}{d(\phi_i)}=\phi_i$. In this case, $\gamma$ is the most informative of all the possible truth valuations. 
\end{proposition}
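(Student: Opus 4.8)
The plan is to prove the two implications separately, with the combination $\gamma = \bigotimes_{i=1}^n \phi_i$ serving as the canonical candidate for a truth valuation. For the backward direction, I first note that if $\proj{\gamma}{d(\phi_i)} = \phi_i$ for all $i$, then $\gamma$ directly witnesses global agreement by definition \eqref{equ: agreement}, since by axiom \ref{labelling} we have $d(\gamma) = \bigcup_{i=1}^n d(\phi_i) = X$, so $\gamma \in \Phi_X$ as required. This direction is essentially immediate from the definition.

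The forward direction is the substantive part. Assume $\phi_1,\dots,\phi_n$ agree globally, so there exists some $\delta \in \Phi_X$ with $\proj{\delta}{d(\phi_i)} = \phi_i$ for all $i$. The goal is to show that $\gamma = \bigotimes_i \phi_i$ itself also satisfies $\proj{\gamma}{d(\phi_i)} = \phi_i$. The key tool is the adjunction \eqref{equ: adjoint equ} characterising $\otimes$ in an adjoint valuation algebra. I would first establish that $\delta \preceq \gamma$: since each $\phi_i = \proj{\delta}{d(\phi_i)}$, the tuple $\tuple{\phi_1,\dots,\phi_n}$ is the image of $\delta$ under the combined projection map, and the left adjoint inequality $\id_{\Phi(X)} \leq \otimes \circ \tuple{\rho^{X}_{d(\phi_i)}}$ forces $\delta \preceq \otimes_i \proj{\delta}{d(\phi_i)} = \gamma$. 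Next, apply monotonicity of projection \ref{transitivity}-style (axiom A13) to get $\phi_i = \proj{\delta}{d(\phi_i)} \preceq \proj{\gamma}{d(\phi_i)}$. For the reverse inequality $\proj{\gamma}{d(\phi_i)} \preceq \phi_i$, I would use the right adjoint inequality $\tuple{\rho^{X}_{d(\phi_j)}} \circ \otimes \leq \id$, which when projected to the $i$-th component yields $\proj{\gamma}{d(\phi_i)} = \proj{(\bigotimes_j \phi_j)}{d(\phi_i)} \preceq \phi_i$. Combining the two inequalities with antisymmetry of $\preceq$ gives $\proj{\gamma}{d(\phi_i)} = \phi_i$.

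For the final claim that $\gamma$ is the most informative truth valuation, the argument above already shows the essential point: any truth valuation $\delta$ satisfies $\delta \preceq \gamma$, so $\gamma$ dominates every candidate in the completeness order $\preceq$. I would phrase this as: if $\delta \in \Phi_X$ is any valuation with $\proj{\delta}{d(\phi_i)} = \phi_i$ for all $i$, then by the left adjoint inequality $\delta \preceq \bigotimes_i \proj{\delta}{d(\phi_i)} = \bigotimes_i \phi_i = \gamma$, establishing maximality of $\gamma$.

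The main obstacle I anticipate is correctly handling the $n$-ary combination within the adjoint framework, since the diagram and equation \eqref{equ: adjoint equ} are stated for the binary case with two subsets $S, T$. I would need to either invoke an $n$-ary generalisation of the adjunction — obtained by iterating the binary adjoint using associativity \ref{semigroup} and the combination axiom \ref{combination} — or set up the universal property directly for the product $\prod_{i=1}^n \Phi(d(\phi_i))$ with the tuple of projections $\tuple{\rho^{X}_{d(\phi_1)},\dots,\rho^{X}_{d(\phi_n)}}$. Care is also needed because the $d(\phi_i)$ may overlap, so the projections do not form an independent product decomposition; verifying that the adjoint inequalities still apply component-wise under these overlaps is the delicate step, and I expect it to rely on the combination axiom \ref{combination} to reconcile projections of combined valuations with combinations of projections.
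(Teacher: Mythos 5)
Your proof is correct and follows essentially the same route as the paper's: take any truth valuation $\delta$, derive $\delta\preceq\gamma$ from the left adjoint inequality, then sandwich $\phi_i=\proj{\delta}{d(\phi_i)}\preceq\proj{\gamma}{d(\phi_i)}\preceq\phi_i$ using monotonicity of projection (A13) and the right adjoint inequality. The paper is in fact terser than you are --- it omits the trivial backward direction and silently iterates the binary adjunction to the $n$-ary case, the very point you flag as needing care --- so your writeup is, if anything, more complete.
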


Thanks to this proposition, the quest for a global truth valuation becomes a much easier task. This, in turn, makes the problem of detecting disagreement significantly simpler. In fact, we can reformulate it as an inference problem. Given a knowledgebase $\{\phi_1,\dots,\phi_n\}$, it is sufficient to solve the problem
\begin{equation}\label{equ: disagreement IP}
\proj{(\phi_1\otimes\dots\otimes \phi_n)}{d(\phi_i)}
\end{equation}
for all $1\leq i\leq n$. Then, the knowledgebase agrees globally if and only if the solution to each problem is $\phi_i$. 

%
%
%
%

\subsection{Complete disagreement}\label{sec: complete disagreement}
Let $\phi_1,\phi_2\in\Phi$ be two valuations of an adjoint information algebra $\Phi$, and let $d(\phi_1)=S$, $d(\phi_2)=T$, with $S\cap T\neq\emptyset$. By Proposition \ref{prop: truth}, to say that $\phi_1$ and $\phi_2$ disagree amounts to say that not \emph{all} the information carried by $\phi_1$ and $\phi_2$ can be preserved by combining them. However, \emph{some} of this information is preserved, namely the quantities $\psi_1\defeq \proj{(\phi_1\otimes\phi_2)}{S}$ and $\psi_2\defeq \proj{(\phi_1\otimes\phi_2)}{T}$. Indeed, $\psi_1\preceq\phi_1$ and $\psi_2\preceq\phi_2$ represent exactly
the portion of information on which the original valuations \emph{do} agree. This can be easily shown by arguing that $\psi_1$ and $\psi_2$ agree on their common variables:
\[
\proj{\psi_1}{S\cap T}=\proj{\left(\proj{(\phi_1\otimes\phi_2)}{S}\right)}{S\cap T}=
\proj{(\phi_1\otimes\phi_2)}{S\cap T}=\proj{\left(\proj{(\phi_1\otimes\phi_2)}{T}\right)}{S\cap T}=\proj{\psi_2}{S\cap T}.
\]
However, there may be situations where $\psi_1$ and $\psi_2$ are null elements of the algebra. This corresponds to a situation where $\phi_1$ and $\phi_2$ disagree \emph{completely}. In this case, we have $\phi_1\otimes\phi_2=z_{S\cup T}.$
The liar cycle of Example \ref{exmp: liar} gives a compelling example of complete disagreement. Let us compute the global valuation $\gamma:=\bigotimes_{i=1}^n \phi_n$. We have
\[
\bigotimes_{i=1}^{n-1}\phi_i=\left\{\vi\in\Models_{V}: \vi(\tuple{s_1,\dots, s_n})=\tuple{0,0,\dots,0} \text{ or } \tuple{1,1,\dots,1} \right\}
\]
Hence,
\[
\begin{split}
\gamma &=\left\{\vi\in\Models_{V}: \vi(\tuple{s_1,\dots, s_n})=\tuple{0,0,\dots,0} \text{ or } \tuple{1,1,\dots,1} \right\}\\
&\otimes\left\{\vi\in\Models_{\{s_1,s_n\}}:\vi(\tuple{s_1,s_n})=\tuple{1,0}\text{ or } \tuple{0,1}\right\}=\emptyset=z_V,
\end{split}
\]
Hence, we conclude that this knowledgebase disagrees completely, despite agreeing locally.

In light of this discussion, we introduce the following definition: we say that $\phi_1,\dots,\phi_n$ \textbf{disagree completely}\index{disagreement (valuation algebras)!complete disagreement|textbf} if $\gamma:=\bigotimes_{i=1}^n\phi_i=z_V$, or, equivalently by axiom \ref{nullity}, if there exists a $1\leq i\leq n$ such that
\[
\proj{(\phi_1\otimes\dots\otimes \phi_n)}{d(\phi_i)}=z_{d(\phi_i)}.
\]

\section{Disagreement and possibilistic forms of contextuality}\label{sec: disagreement and possibilistic forms of contextuality}\label{sec: disagreement and logical contextuality}
In this section we investigate the link between disagreement and possibilistic forms of contextuality, \ie logical and strong.
This will lead us to establish a connection between strong contextuality and complete disagreement, and to prove that the problem of detecting logical forms of contextuality can be rephrased as an inference problem for the valuation algebra of relational databases. 

An alternative definition of possibilistic empirical models, developed in \cite{AbramskyEtAl:CCP2015}, will be particularly useful: a possibilistic empirical model on a scenario $\scenario$ can be equivalently defined as a subpresheaf $\mathcal{S}\subseteq\mathcal{E}$ which verifies the following properties:
\begin{enumerate}
\item\label{cond1} $\mathcal{S}(C)\neq \emptyset$ for all $C\in\mathcal{M}$
\item\label{cond2} $\mathcal{S}$ is \emph{flasque beneath the cover}, i.e. the map $\mathcal{S}(U\subseteq U')$ is surjective whenever $U\subseteq U'\subseteq C$ for some $C\in\mathcal{M}$. 
\item\label{cond3} Every compatible family $\{\s_C\in\mathcal{S}(C)\}_{C\in\mathcal{M}}$ induces a global section $\g\in\mathcal{S}(X)$ such that $\g\mid_C=\s_C$ for all $C\in\M$. Note that this global section is unique since $\mathcal{S}$ is a subpresheaf of the sheaf $\mathcal{E}$. 
\end{enumerate}
Then, possibilistic forms of contextuality can be shown to correspond to the following properties \cite{AbramskyEtAl:CCP2015}. Let $\S:\P(X)^{op}\rightarrow\Set$ be a possibilistic model on a scenario $\scenario$. 
\begin{itemize}
\item Given a context $C\in\mathcal{M}$ and a section $\s\in\mathcal{S}(C)$, $\mathcal{S}$ is \textbf{logically contextual at $\s$}, or $\mathsf{LC}(\mathcal{S}, \s)$, if $\s$ is not a member of any compatible family. 
\item $\mathcal{S}$ is \textbf{strongly contextual}, or $\mathsf{SC}(\mathcal{S})$, if $\mathsf{LC}(\mathcal{S}, \s)$ for all $\s$. In other words, by condition \ref{cond3}, $\S$ does not have any global section: $\mathcal{S}(X)=\emptyset$.
\end{itemize}

Let $\scenario$ be a scenario, and consider the valuation algebra $\Phi$ of relational databases on the set of variables $X$, where we define the frame of each variable $m\in X$ to be $\Omega_m\defeq O_m$. In particular, this means that for any context $C\in\M$ we have $\Omega_C=\E(C)$.
Now, suppose we have a possibilistic empirical model $\S:\P(X)^{op}\rightarrow\Set$ over $\scenario$. 
We associate to it the knowledgebase $K=\{\S(C)\subseteq\E(C)=\Omega_C\}_{C\in\M}\subseteq \Phi.$
\begin{proposition}[See \ref{proof truth} for the proof]\label{prop: K agrees locally}
The knowledgebase $K$ agrees locally.
\end{proposition}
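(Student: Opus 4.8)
The plan is to verify the definition of local agreement directly, translating it into the concrete projection of the relational-database algebra and then invoking the flasque property of $\S$. Local agreement of $K$ means the valuations agree pairwise, so I must show that for every pair of contexts $C,C'\in\M$, writing $W\defeq C\cap C'$,
\[
\proj{\S(C)}{W}=\proj{\S(C')}{W}.
\]
In the database algebra projection is the image under cartesian projection, $\proj{\S(C)}{W}=\{\projn{\x}{W}:\x\in\S(C)\}$. The first thing to note is that, since $\S$ is a subpresheaf of $\E$ whose restriction maps are cartesian projections, this set is exactly the image of the presheaf restriction map $\S(W\subseteq C)\colon\S(C)\to\S(W)$.

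The key step is to identify both sides of the agreement equation with the single set $\S(W)$. Since $W=C\cap C'\subseteq C$ and $C\in\M$, condition \ref{cond2} (flasqueness beneath the cover) tells me that $\S(W\subseteq C)$ is surjective, so its image is all of $\S(W)$; that is, $\proj{\S(C)}{W}=\S(W)$. Running the identical argument with $C'$ in place of $C$ (using $W\subseteq C'$ and $C'\in\M$) gives $\proj{\S(C')}{W}=\S(W)$. Combining these yields $\proj{\S(C)}{W}=\S(W)=\proj{\S(C')}{W}$, which is precisely pairwise agreement, hence local agreement of $K$.

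I do not anticipate a real obstacle: the whole argument rests on recognising that flasqueness beneath the cover says exactly that restricting a local section of $\S$ to any sub-context recovers the entire fibre $\S(W)$, so both projections collapse to the common set $\S(W)$. The only point worth a remark is the degenerate case $W=\emptyset$, where $\S(\emptyset)$ is the singleton containing the empty tuple as soon as $\S(C)$ is nonempty; this nonemptiness is guaranteed by condition \ref{cond1}, and the flasqueness argument then applies verbatim, so no separate case analysis is needed.
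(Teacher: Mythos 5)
Your proof is correct and follows essentially the same route as the paper's: both arguments reduce each side of the pairwise-agreement equation to $\S(C\cap C')$, using the subpresheaf property for one inclusion and flasqueness beneath the cover (condition \ref{cond2}) for the surjectivity that gives the other. The only difference is presentational — you work directly with the set-image projection of the relational-database algebra, while the paper phrases the same computation in terms of the indicator functions $i_{\S(C)}$ in the isomorphic Boolean-potential picture.
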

This means that a possibilistic empirical model corresponds to a locally-agreeing knowledgebase of the algebra or relational databases. We now have all the elements to establish the connection between global disagreement and logical forms of contextuality


\begin{proposition}[See \ref{proof truth} for the proof]\label{prop: LCD}
The knowledgebase $K$ disagrees globally if and only if $\S$ is logically contextual. It disagrees completely if and only if $\S$ is strongly contextual.
\end{proposition}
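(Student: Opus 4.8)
The plan is to make everything turn on a single object, the joint valuation $\gamma \defeq \bigotimes_{C\in\M}\S(C)$. Since $\Phi$ is the algebra of relational databases, combination is the natural join, so I would first unfold $\gamma$ explicitly as
\[
\gamma = \{\x\in\Omega_X=\E(X) : \projn{\x}{C}\in\S(C)\text{ for all }C\in\M\},
\]
which exhibits $\gamma$ as exactly the set of global events of $\E(X)$ that are compatible with the model. Because relational databases form an adjoint algebra (Proposition~\ref{prop: appendix}), Proposition~\ref{prop: truth} applies and tells us that $K$ agrees globally if and only if $\proj{\gamma}{C}=\S(C)$ for every $C\in\M$. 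Since $\projn{\x}{C}\in\S(C)$ for each $\x\in\gamma$, the inclusion $\proj{\gamma}{C}\subseteq\S(C)$ is automatic, so $K$ disagrees globally precisely when $\proj{\gamma}{C}\subsetneq\S(C)$ for some $C$.

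The heart of the argument is then a dictionary between membership in $\proj{\gamma}{C}$ and membership in a compatible family. I would prove, for $\s\in\S(C)$, the equivalence: $\s\in\proj{\gamma}{C}$ if and only if $\s$ belongs to some compatible family. The forward direction is immediate, as any $\x\in\gamma$ with $\projn{\x}{C}=\s$ yields the compatible family $\{\projn{\x}{C'}\}_{C'\in\M}$. For the converse I would appeal to condition~\ref{cond3}: a compatible family $\{\s_{C'}\}$ with $\s_C=\s$ induces a global section $\g\in\S(X)$, and since $\S\subseteq\E$ we get $\projn{\g}{C'}=\s_{C'}\in\S(C')$ for all $C'$, whence $\g\in\gamma$ and $\projn{\g}{C}=\s$. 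Recalling that $\mathsf{LC}(\S,\s)$ means precisely that $\s$ lies in no compatible family, this dictionary gives: $\proj{\gamma}{C}\subsetneq\S(C)$ for some $C$ iff $\mathsf{LC}(\S,\s)$ holds for some $C$ and some $\s\in\S(C)$, which is exactly logical contextuality. This settles the first equivalence.

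For the second equivalence I would reuse the same unfolding. Complete disagreement means $\gamma=z_X=\emptyset$. Running the correspondence once more, $\gamma\neq\emptyset$ iff some global event is compatible with the model, which (again via condition~\ref{cond3} and $\S\subseteq\E$) happens iff $\S(X)\neq\emptyset$; the reverse implication is trivial since any $\g\in\S(X)$ restricts into each $\S(C)$ and hence lies in $\gamma$. As strong contextuality is by definition $\S(X)=\emptyset$, complete disagreement and strong contextuality coincide.

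The only delicate point I anticipate is the bookkeeping linking elements of $\gamma\subseteq\E(X)$, compatible families $\{\s_C\}_{C\in\M}$, and global sections $\g\in\S(X)$; the passage from a compatible family back to a genuine global event rests squarely on condition~\ref{cond3} together with the uniqueness afforded by $\S$ being a subpresheaf of the sheaf $\E$ (and on the cover satisfying $\bigcup\M=X$, so that the restrictions to the contexts pin down a global element). Everything else reduces to direct manipulation of the natural join and projection, plus the single invocation of Proposition~\ref{prop: truth}.
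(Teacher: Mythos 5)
Your proposal is correct and follows essentially the same route as the paper's own proof: unfold the natural join $\gamma=\bigotimes_{C\in\M}\S(C)$ as the set of globally compatible events, invoke Proposition~\ref{prop: truth} via adjointness to reduce global disagreement to $\proj{\gamma}{C}\subsetneq\S(C)$ for some $C$, and use condition~\ref{cond3} to pass between members of compatible families and elements of $\proj{\gamma}{C}$, with the strong-contextuality case handled by the same correspondence applied to $\gamma=\emptyset$ versus $\S(X)=\emptyset$. The only cosmetic difference is that you package the key step as a single ``dictionary'' equivalence where the paper runs two short arguments by contradiction.
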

Notice that the first part of the proposition reiterates the idea, expressed in Section \ref{sec: NL C}\ref{sec: disagreement and contextuality}, that contextuality in an empirical model corresponds to an instance local agreement vs global disagreement.\footnote{In fact the two results coincide, as one can show that the algebra of relational databases is equivalent to the one of Boolean potentials via the isomorphism $\P(X)\cong 2^X$.} The statement connecting strong contextuality with complete disagreement, on the other hand, is a new addition to the theory. 

\subsection{Detecting logical and strong contextuality is an inference problem}
Thanks to Proposition \ref{prop: LCD} and the results of the previous sections, we can easily translate the problem of detecting logical and strong contextuality into inference problems. This aspect is particularly important since it allows to apply the numerous efficient algorithms developed by the long-established theory of generic inference.  
The following proposition follows immediately from Proposition \ref{prop: LCD} and the results of Sections \ref{sec: disagreement and inference problems}\ref{sec: combination}\ref{sec: disagreement IP} and \ref{sec: disagreement and inference problems}\ref{sec: complete disagreement}:

\begin{proposition}\label{prop: IPC}
Let $\S:\P(X)\rightarrow\Set$ be an empirical model over a scenario $\scenario$. Then
\begin{itemize}
\item The model $\S$ is logically contextual if and only if there exists a context $\overline{C}\in\M$ such that the inference problem
\begin{equation}\label{equ: IPLC}
\proj{\left(\bigotimes_{C\in\M}\S(C)\right)}{\overline{C}}
\end{equation}
for the algebra of relational databases has a solution different from $\S(\overline{C})$. 
\item The model $\S$ is strongly contextual if and only if, there exists a $\overline{C}\in\M$, such that the solution to the inference problem \eqref{equ: IPLC} 
is $z_{\overline{C}}$. Notice that, by \eqref{equ: z} in \ref{nullity}, this is equivalent to saying that $\bigotimes_C\S(C)=z_X$.
\end{itemize}
\end{proposition}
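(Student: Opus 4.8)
The plan is to obtain this statement as a direct composition of three ingredients already in place: the translation of logical and strong contextuality into global and complete disagreement of the knowledgebase $K=\{\S(C)\}_{C\in\M}$ furnished by Proposition~\ref{prop: LCD}, together with the two inference-problem reformulations of disagreement developed in Section~\ref{sec: disagreement and inference problems}. The essential structural fact making these reformulations available here is that the valuation algebra of relational databases is adjoint (Proposition~\ref{prop: appendix}), so that Proposition~\ref{prop: truth} applies to $K$. Throughout I abbreviate $\gamma\defeq\bigotimes_{C\in\M}\S(C)$ and record that, since $\S(C)\subseteq\Omega_C$, we have $d(\S(C))=C$; consequently $\proj{\gamma}{\overline{C}}$ is precisely the (unique) solution of the inference problem~\eqref{equ: IPLC}, and the whole argument reduces to reading off the relevant equalities.

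For the first bullet, I would apply Proposition~\ref{prop: truth} to $K$ (legitimate since the algebra is adjoint), which yields that $K$ agrees globally if and only if $\proj{\gamma}{C}=\S(C)$ for every $C\in\M$; Proposition~\ref{prop: K agrees locally} confirms in addition that $K$ agrees locally, so this is a genuine instance of local agreement versus global disagreement. Taking the contrapositive, $K$ disagrees globally if and only if there is some $\overline{C}\in\M$ with $\proj{\gamma}{\overline{C}}\neq\S(\overline{C})$, i.e. exactly when the inference problem~\eqref{equ: IPLC} has, for some $\overline{C}$, a solution different from $\S(\overline{C})$. Composing this with the first equivalence of Proposition~\ref{prop: LCD}, which identifies global disagreement of $K$ with logical contextuality of $\S$, gives the first claim.

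For the second bullet, I would start from the second equivalence of Proposition~\ref{prop: LCD}, which identifies complete disagreement of $K$ with strong contextuality of $\S$. By the definition of complete disagreement in Section~\ref{sec: disagreement and inference problems}\ref{sec: complete disagreement}, $K$ disagrees completely precisely when $\gamma=z_X$ (here the variable set is $X$), and, by the equivalence~\eqref{equ: z} of axiom~\ref{nullity}, this holds if and only if there exists $\overline{C}\in\M$ with $\proj{\gamma}{\overline{C}}=z_{\overline{C}}$, that is, exactly when the solution of~\eqref{equ: IPLC} is the null element $z_{\overline{C}}$. Combining the two equivalences delivers the second claim.

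I do not expect a genuine mathematical obstacle: as the surrounding text indicates, the proposition follows immediately once the earlier results are in hand, and the work is purely a matter of matching hypotheses and notation. The points requiring care are that relational databases really do form an adjoint algebra so that Proposition~\ref{prop: truth} may be invoked; that the domain of each valuation is $d(\S(C))=C$, so that the projection targets in~\eqref{equ: IPLC} coincide with the individual valuations $\S(C)$; and that the null-element equivalence~\eqref{equ: z} is used correctly to pass between the global condition $\gamma=z_X$ and the existence of a single witnessing context $\overline{C}$ with $\proj{\gamma}{\overline{C}}=z_{\overline{C}}$.
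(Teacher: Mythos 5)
Your proposal is correct and follows exactly the route the paper intends: the paper gives no separate proof of Proposition~\ref{prop: IPC}, stating only that it follows immediately from Proposition~\ref{prop: LCD} combined with Proposition~\ref{prop: truth} (applicable because the relational database algebra is adjoint) and the null-element characterisation of complete disagreement via~\eqref{equ: z}. Your write-up simply makes that composition explicit, with the hypotheses correctly matched.
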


It follows that, in order to determine whether a model $\S$ is strongly contextual, one has to solve a single inference problem \eqref{equ: IPLC}: if the result is $z_{\overline{C}}$, then $\S$ is strongly contextual. On the other hand, to determine whether $\S$ is logically contextual, one has to solve $|\M|$ distinct problems, in the worst case scenario. 
This reformulation is particularly important since it allows to use efficient algorithms of generic inference to solve \eqref{equ: IPLC}. We have already started to explore this possibility in separate work, and it has been possible to develop new algorithms for the detection of logical forms of contextuality which significantly outperform the current state of the art \cite{CaruPhD2019, AbramskyCaru2019Upcoming}.

\section{Conclusions}\label{sec: conclusion}
We have presented a general definition of different forms of disagreement between information sources in the abstract framework of valuation algebras. In particular, we identified three kinds of disagreement: local, global and complete, and presented examples of each of them using different valuation algebras. A particular attention has been given to instances of knowledgebases which agree locally but disagree globally. By recovering the valuation algebraic formalism in sheaf-theoretic terms, we showed that sheaf-theoretic contextuality is simply a special case of such a knowledgebase, where the valuation algebra in question is the one of $R$-potentials, while strong contextuality is a special case of complete disagreement for the algebra of relational databases. This result is a vast generalisation of the previously observed connections between contextuality and relational databases, constraint satisfaction problems, and logical paradoxes, and constitutes a promising attempt to establish a general theory of contextual semantics. The main advantage of such an abstract and flexible treatment is that it significantly widens the scope for the observation of contextual behaviour, and could potentially lead to the transfer of results and methods for disagreement across the many different fields captured by the valuation algebraic framework. 

In separate work, we have started to explore this potential by applying popular inference algorithms such as the fusion \cite{shenoy1992valuation} and collect \cite{SHENOY:1990aa} methods to detect contextuality in empirical models. This problem is notoriously complex from a computational perspective \cite{AbramskyGottlobKolaitis2013:RobustConstraintSatisfactionAndLHV}, and very few algorithms have been developed for this specific task. Yet computational explorations of logical forms of contextuality, however limited, have proved very useful for the advancement of the theory \cite{AbramskyGottlobKolaitis2013:RobustConstraintSatisfactionAndLHV, MansfieldFritz11:Hardy}, and would greatly benefit from any algorithmic improvement. For this reason, the theory introduced in this paper represents a major opportunity to enhance our understanding of contextuality.

\competing{The authors declare no competing interests.}
\aucontribute{Both the authors contributed equally to this work. Their names are listed in alphabetical order.}
\ack{The authors would like to thank Rui Soares Barbosa for helpful discussions.}
\funding{Support from the following is gratefully acknowledged: EPSRC EP/N018745/1, `Contextuality as a Resource in Quantum Computation' (SA); EPSRC Doctoral Training Partnership, and  Oxford--Google Deepmind Graduate Scholarship (GC).}

\appendix

\section{Proofs}\label{proof truth}

\subsection{Theorem \ref{thm: main connection}}
\begin{proof}
If $e$ is non-contextual, then there exists a global distribution $d:\E(X)\rightarrow R$ such that $d|_C=e_C$ for all $C\in\M$. Hence $d$ is a global truth valuation for the knowledgebase $K$. Conversely, suppose $K$ agrees globally, which means that there exists a global $R$-potential $d:\Omega_X=\E(X)\rightarrow R$ such that $\proj{d}{C}=e_C$ for all $C\in \M$. The only thing we need to prove is that $d$ is an $R$-distribution, \ie that it is normalised. Let $\ast$ denote the unique $\emptyset$-tuple and let $C\in\M$ be an arbitrary context. We have:
\[
\begin{split}
\sum_{\g\in\E(X)}d (\g) &=\sum_{\substack{\g\in\E(X): \\ \projn{\g}{\emptyset}=\ast}}d (\g)=\proj{d}{\emptyset}(\ast)=\proj{\left(\proj{d}{C}\right)}{\emptyset}(\ast)=\proj{(e_C)}{\emptyset}(\ast)\\
&= \sum_{\substack{\x\in\E(C): \\ \projn{\x}{\emptyset} = \ast}}e_{C}(\x)= \sum_{\x\in\E(C)}e_{C}(\x)=1.
\end{split}
\]
\end{proof}

\subsection{Adjoint information algebras}\label{sec: adjoint info algebras}

\begin{proposition}\label{prop: appendix}
The algebra of information sets related to a family $\{\tuple{\L_Q,\Models_Q, \models_Q}\}_{Q\subseteq V}$ defined over any language $\L$ and set of models $\Models$ is adjoint. 
\end{proposition}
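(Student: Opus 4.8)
The plan is to exhibit the combination $\otimes$ as a genuine right adjoint by verifying directly the Galois connection that underlies the two inequalities \eqref{equ: adjoint equ}. I order $\Phi$ by inclusion, setting $M\preceq M'$ iff $M\subseteq M'$ (for $M,M'$ of equal domain); this is the same order used for relational databases, and with it the infimum of any $\Psi\subseteq\Phi_Q$ is $\bigcap\Psi$, so that $\inf(\Phi_Q)=\emptyset=z_Q$. The remaining axioms (A10)--(A13) are then routine, monotonicity of the join and of image-projection under inclusion being immediate. It therefore only remains to treat the adjointness condition. Fix finite $S,T\subseteq V$ and regard $\langle\rho_S^{S\cup T},\rho_T^{S\cup T}\rangle\colon\Phi(S\cup T)\to\Phi(S)\times\Phi(T)$ and $\otimes\colon\Phi(S)\times\Phi(T)\to\Phi(S\cup T)$ as monotone maps of posets.

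The crux is the single biconditional, for $M\subseteq\Models_{S\cup T}$, $N_1\subseteq\Models_S$ and $N_2\subseteq\Models_T$,
\[
\proj{M}{S}\subseteq N_1 \ \text{ and }\ \proj{M}{T}\subseteq N_2 \quad\Longleftrightarrow\quad M\subseteq N_1\otimes N_2,
\]
which says precisely that $\langle\rho_S^{S\cup T},\rho_T^{S\cup T}\rangle$ is left adjoint to $\otimes$. For the forward implication I take $m\in M$ and use $\projn{m}{S}\in\proj{M}{S}\subseteq N_1$ and $\projn{m}{T}\in\proj{M}{T}\subseteq N_2$, whence $m\in N_1\otimes N_2$ by the defining description of the combination. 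For the converse, given $x\in\proj{M}{S}$ I write $x=\projn{m}{S}$ with $m\in M\subseteq N_1\otimes N_2$, so $x=\projn{m}{S}\in N_1$; the argument for $T$ is symmetric. Each direction is a one-line unwinding of the definitions of image-projection and of the join.

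From this Galois connection the two inequalities of \eqref{equ: adjoint equ} follow as unit and counit: instantiating $(N_1,N_2)=\langle\rho_S^{S\cup T},\rho_T^{S\cup T}\rangle(M)$ gives $M\subseteq(\proj{M}{S})\otimes(\proj{M}{T})$, i.e.\ $\id\preceq\otimes\circ\langle\rho_S^{S\cup T},\rho_T^{S\cup T}\rangle$, while taking $M=N_1\otimes N_2$ gives $\proj{(N_1\otimes N_2)}{S}\subseteq N_1$ and $\proj{(N_1\otimes N_2)}{T}\subseteq N_2$, i.e.\ the counit. Since right adjoints between posets are unique when they exist, $\otimes$ is the unique such map, and $\Phi$ is adjoint.

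The only point requiring care, and the closest thing to an obstacle, is that all of the above is phrased for an arbitrary tuple system $\Models$ rather than a literal cartesian product. I would therefore check that the two facts actually used, namely that $\projn{m}{S}\in\Models_S$ for $m\in\Models_{S\cup T}$ and that $N_1\otimes N_2$ is correctly described as $\{m\in\Models_{S\cup T}:\projn{m}{S}\in N_1,\ \projn{m}{T}\in N_2\}$, are exactly the cartesian-like behaviour guaranteed by the tuple-system and infomorphism axioms assumed in the construction of information sets. Granting these, no deeper property of the particular language $\L$ or of the satisfaction relation $\models_Q$ is needed, so the result holds uniformly across all the instances subsumed by the construction.
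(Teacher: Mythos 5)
Your proof is correct and follows essentially the same route as the paper: both come down to the same two set-theoretic unwindings, namely $M\subseteq\proj{M}{S}\otimes\proj{M}{T}$ and $\proj{(N_1\otimes N_2)}{S}\subseteq N_1$. Packaging them as unit and counit of an explicit Galois connection is a clean but inessential reorganization of the paper's direct verification of \eqref{equ: adjoint equ}, and your closing caveat about the tuple-system axioms is exactly the right thing to check.
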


\begin{proof}
It is sufficient to prove \eqref{equ: adjoint equ}:
\begin{itemize}
\item Let $M\subseteq \Models_{Q\cup U}$, where $Q,U\subseteq V$.
\[
\begin{split}
\proj{M}{Q}\otimes\proj{M}{U} &=\{\projn{\x}{Q}:\x\in M\}\otimes\{\projn{\x}{U}:\x\in M\}\\
&=\left\{\vi\in\Models_{Q\cup U}:\exists\x,\y\in M\text{ s.t. }(\projn{\vi}{Q}=\projn{\x}{Q})\wedge(\projn{\vi}{U}=\projn{\y}{U})\right\}.
\end{split}
\]
Then, clearly, $M\subseteq \proj{M}{Q}\otimes\proj{M}{U}$.
\item Now, let $M_1\subseteq \Models_Q$ and $M_2\subseteq \Models_U$. We have
\[
\begin{split}
\proj{(M_1\otimes M_2)}{Q} &=\proj{\{\vi\in\Models_{Q\cup U}: (\projn{\vi}{Q}\in M_1)\wedge (\projn{\vi}{U}\in M_2) \}}{Q}\\
&=\left\{\projn{\vi}{Q}: \vi\in\Models_{Q\cup U}\wedge (\projn{\vi}{Q}\in M_1)\wedge (\projn{\vi}{U}\in M_2)\right\}\subseteq M_1.
\end{split}
\]
One proves that $\proj{(M_1\otimes M_2)}{U}\subseteq M_2$ in the same way.
\end{itemize}
\end{proof}

\subsection{Proposition \ref{prop: truth}}
\begin{proof}[Proof of Proposition \ref{prop: truth}]
Suppose $\delta\in\Phi_V$ is a truth valuation for $K$. Then we have
\[
\delta\preceq\bigotimes_{i=1}^n\proj{\delta}{d(\phi_i)}=\bigotimes_{i=1}^n\phi_i=\gamma
\]
Moreover, because projection is monotone by axiom (A13), we have
\[
\phi_i\preceq\proj{\delta}{d(\phi_i)}\stackrel{\text{(A13)}}{\preceq} \proj{\gamma}{d(\phi_i)}\preceq\phi_i.
\]
Thus $\gamma$ is a truth valuation for $K$.
\end{proof}
\subsection{Proposition \ref{prop: K agrees locally}}\label{prop: local}

\begin{proof}[Proof of Proposition \ref{prop: K agrees locally}]
Let $\x\in\E(C\cap C')=\Omega_{C\cap C'}$. We have
\[
\begin{split}
\proj{i_{\S(C)}}{C\cap C'}(\x) &=\max_{\y\in \E(C\setminus C')}i_{\S(C)}(\x,\y)=
\begin{cases}
1 & \text{ if } \exists\y\in\E(C\setminus C'): \tuple{\x,\y}\in\S(C),\\
0 & \text{ otherwise.}
\end{cases}\\
&=\begin{cases}
1 & \text{ if } \x\in\S(C\cap C')\\
0 & \text{otherwise}
\end{cases}=i_{\S(C\cap C')}
\end{split}
\]
where the penultimate equality follows from the fact that $\S$ is flasque beneath the cover, and $C\cap C'\subseteq C\in\M$. 
With the same argument we show that $\proj{i_{\S(C')}}{C\cap C'}=i_{\S(C\cap C')}$, and we conclude
\[
\proj{i_{\S(C)}}{C\cap C'}=i_{\S(C\cap C')}=\proj{i_{\S(C')}}{C\cap C'},
\]
which means that $K$ agrees locally.
\end{proof}

\subsection{Proposition \ref{prop: LCD}}

\begin{proof}[Proof of Proposition \ref{prop: LCD}]\
We will denote $\Gamma:=\bigotimes_{C\in\M}S(C)$.

\begin{itemize}
\item Suppose $K$ disagrees globally. By Proposition \ref{prop: truth}, there exists a context $C_0\in\M$ such that $\proj{\Gamma}{C_0}\neq \S(C_0)$. Since the algebra of relational databases is adjoint, by Proposition \ref{prop: truth}, this implies $\proj{\gamma}{C_0}\subset \S(C_0)$, which means that there exists a local section $\x\in\S(C_0)$ such that $\x\notin\proj{\Gamma}{C_0}$. We will now show that $\S$ is logically contextual at $\x$. Suppose $\neg\LC(\S,\x)$ by contradiction. Then there exists a global section $\g\in\S(X)$ such that $\projn{\g}{C_0}=\x$. Because $\g\in\S(X)$, we have $\projn{\g}{C}\in\S(C)$ for all $C\in\M$, which implies $\g\in\Gamma$. Then, $\x=\projn{g}{C_0}\in\proj{\Gamma}{C_0}$, which is a contradiction.

Now, Suppose $\S$ is logically contextual at a section $\x\in\S(C_0)$. If $\x\in\proj{\Gamma}{C_0}$, then there exists a $\g\in\Gamma$ such that $\projn{\g}{C_0}=\x$. Since $\g\in\Gamma$, we have $\projn{\g}{C}\in\S(C)$ for all $C\in\M$. Thus, by condition \eqref{cond3} of the definition of a possibilistic empirical model, $\g\in\S(X)$, which means that $\g$ is a global is a global section extending $\x$. This contradicts the fact that $\S$ is logicallly contextual at $\x$. We conclude that $\x\notin\proj{\Gamma}{C_0}$. Thus $\proj{\Gamma}(C_0)\neq\S(C_0)$, hence $K$ disagrees globally. 
\item We will now prove that $K$ disagrees completely if and only if $\S$ is strongly contextual. Recall that the null element of the algebra is the emptyset $z_X=\emptyset$. We have
\[
\begin{split}
\neg\SC(\S) &\Leftrightarrow \exists \g\in\S(X)\Leftrightarrow \exists \g\in\E(X): \projn{\g}{C}\in\S(C)~\forall C\in\M\\
&\Leftrightarrow \exists\g\in\S(X):\g\in\Gamma\Leftrightarrow\Gamma\neq 0.
\end{split}
\]
\end{itemize}
\end{proof}

\end{document}